\numberwithin{equation}{section}
\newtheorem{theorem}{Theorem}[section]
\title{Modeling Cost-Associated Cooperation: A Dilemma of Species Interaction Unveiling New Aspects of Fear Effect}
\author[1]{Suvranil Chowdhury}
\author[2]{Susmita Sarkar}
\author[1,*]{Joydev Chattopadhyay}
\affil[1]{Indian Statistical Institute, Agricultural and Ecological Research Unit, 203, B.T. Road, Kolkata, 700108, India}
\affil[2]{University of Calcutta, Department of Applied Mathematics, 92, Acharya Prafulla Chandra Road, Kolkata, 700009, India}
\affil[*]{joydev@isical.ac.in}
\begin{abstract}
With limited resources, competition is widespread, yet cooperation persists across taxa, from microorganisms to large mammals. Recent observations reveal that contingent factors often drive cooperative interactions, with the intensity heterogeneously distributed within species. While cooperation has beneficial outcomes, it may also incur significant costs, largely depending on species density. This creates a dilemma that is pivotal in shaping sustainable cooperation strategies. Understanding how cooperation intensity governs the cost-benefit balance, and whether an optimal strategy exists for species survival, is a fundamental question in ecological research, and the focus of this study. We develop a novel mathematical model within Lotka-Volterra framework to explore the dynamics of cost-associated partial cooperation, which remains relatively unexplored in ODE model-based studies. Our findings demonstrate that partial cooperation benefits ecosystems up to a certain intensity, beyond which costs become dominant, leading to system collapse via heteroclinic bifurcation. This outcome captures the precise cost-cooperation dilemma, providing insights for adopting sustainable strategies and resource management for species survival. We propose a novel mathematical approach to detect and track heteroclinic orbits in predator-prey systems. Moreover, we show that introducing fear of predation can protect the regime shift, even with a type-I functional response, challenging traditional ecological views. Although fear is known to resolve the "paradox of enrichment," our results suggest that certain levels of partial cooperation can reestablish this dynamic even at higher fear intensity. Finally, we validate the system's dynamical robustness across functional responses through structural sensitivity analysis.
\end{abstract}
\keywords{Cost-associated cooperation, Demographic Dynamics, Heteroclinic Bifurcation, Tipping Event, Fear of predation, Structural sensitivity analysis.}
\begin{document}
\flushbottom
\maketitle
\section{Introduction}\label{introduction}
In the realm of ecosystem studies, stability serves as a cornerstone for species survival, commanding global attention from scientists. However, recent research emphasizes the necessity of investigating the underlying factors influencing stability or instability within ecological systems, moving beyond solely identifying conditions for stability or other dynamical properties. This dual pursuit not only enhances our comprehension of ecosystems but also offers crucial insights for managing fluctuations or destabilization by choosing the necessary methodology to lower the possibility of species extinction. Effective management often requires considering the combined effects of multiple biological phenomena rather than isolated factors. Even to comprehensively understand the factors influencing simple prey-predator interactions, contemporary studies emphasize the importance of exploring psychological, behavioral, and cognitive characteristics alongside direct interactions. \citep{le2014short}. Such traits exert significant influence over predator and prey populations, often outweighing the importance of direct interactions alone \citep{wang2016studying, zanette2019ecology}. In recent studies, fear of predation has gained central focus because it significantly influences prey behavior and psychology, impacting reproduction, competition, foraging, cooperation, and ultimately prey mortality \citep{zanette2011perceived,zanette2019ecology}. Recent work addresses various aspects of predation fear and emphasizes its stabilizing effect \citep{panday2018stability, pal2015stability, hossain2020fear}. For further insights, refer to recent reviews on fear-related studies \citep{he2022stability, sha2019backward, panday2021dynamics}.\\
Like predation fear, cooperation serves as a fundamental and crucial mechanism for enhancing species' survival chances. Species may cooperate to reduce predation risk, strengthen anti-predator defenses, or improve foraging efficiency. However, cooperation is widespread and influences many survival strategies \citep{muller2005conflict, dugatkin1997cooperation, reluga2005simulated, courchamp2008Allee}. These strategies include optimal resource use, increased vigilance, resource sharing, and joint efforts in raising offspring. Extensive research underscores the pivotal role of prey cooperation in shaping predator-prey dynamics, impacting both prey and predator behavior \citep{boucher1982ecology, bronstein2015mutualism}. But, a long-debated question among ecologists is whether the intensity of cooperation is uniformly distributed among individuals and whether this cooperation always yields beneficial outcomes only. Experiments involving food exchange and other survival-related activities have shown that, except for a few primates, most species exhibit contingent cooperative behavior \citep{melis2006engineering, de2000payment, rutte2008influence}. For example, female baboons avoid females in conflicts with kin \citep{cheney1999recognition}, and female chickadees assess their mate's dominance status by observing singing contests with neighboring males \citep{mennill2002female}. Such selective cooperation can benefit related individuals but may reduce reproduction rates across the population and can indirectly affect the intensity of inter-specific competition, risking extinction, especially in smaller groups \citep{smith1964group,martinez2017sexual,morrow2004sexual}. This heterogeneity challenges the assumption of uniformly distributed cooperative interaction among individuals \citep{silk2007strategic, de2010prosocial, silk2009nepotistic}.
Herd behavior and signal calling, often seen as cooperative traits, improve foraging by facilitating information flow among individuals. However, large groups may aid predators in locating prey, while smaller groups are more vulnerable to predation \citep{rubenstein2010cooperation, sherman1977nepotism}. Again cooperative breeding raises questions about its evolutionary advantages and mechanisms, despite potential fitness costs and extinction risks (as seen in Suricata suricatta) \citep{trivers1971evolution, clutton1998costs}. These dynamics evidently reflect the density dependence of the costs associated with cooperation.\\ 
Rather than uniformly distributed cooperation among individuals within a species, Cheney et al. \citep{cheney2011extent} argue that cooperation is a psychological, emotional, and sometimes cognitive strategy decision made by considering potential cost-benefit ratios, leading to what is termed "Partial Cooperation" \citep{stark2010dilemmas}. While cost-benefit conflicts of cooperation are often addressed through evolutionary game-theoretic models, the exploration of these impacts within a simple predator-prey ODE system remains largely unexplored. However, potential cost-benefit trade-offs within communities are crucial for understanding the evolutionary significance of cooperative strategies, yet they remain largely untapped areas in ODE-based prey-predator models, which are typically based on the assumption of homogeneous cooperative intensity and only beneficial interactive outcomes. Therefore, by addressing this research gap, this study primarily focuses on constructing an ODE predator-prey model capable of capturing the collective effect of cost-associated cooperative phenomena over the dynamics of predator-prey systems. The model will also aim to reflect the significant influence of population density on the cost-benefit trade-off that we addressed earlier. Addressing the cost-benefit trade-off in cooperation raises key questions that we aim to explore: how does the cost-benefit ratio shift with varying cooperation intensity, and how does this impact classical prey-predator dynamics? Moreover, is there an optimal cooperation intensity that maximizes ecosystem benefits? \\
Mutually beneficial interactions, such as cooperation can introduce destabilizing pressures at low population levels, resulting in Allee effects \citep{lidicker2010Allee,asmussen1979density}, and can alter the carrying capacity of populations, affecting the maximum sustainable size \citep{zhang2003mutualism,hamilton2009population,storch2019carrying}. Although these phenomena are often studied separately, it is well recognized that demographic changes occur simultaneously during intraspecific or interspecific interactions in nature—an area that, to the best of our knowledge, remains largely unaddressed in ODE-based studies. Hence, our exploration will seek to bridge this gap.\\
Fear of predation, as previously discussed, significantly influences anti-predator movement and other cooperative strategies driven by contingent behaviors in prey species in the presence of predators. Consequently, we integrate both cost-associated partial cooperation and fear of predation to more accurately capture the dynamics of predator-prey interactions and revisit the findings of Wang et al. \citep{wang2016modelling} at varying intensities of cooperation, which suggested that the effectiveness of fear is sensitive to the type of functional response term. This raises a critical question: Are our dynamical results robust, or do they depend on the structure of the functional response? Inspired by the work of Morozov et al. \citep{flora2011structural, adamson2013can, adamson2014defining, adamson2020identifying}, we will conduct a structural sensitivity analysis to evaluate the robustness of our findings concerning changes in the functional response term in section\eqref{structures}.\\
This study presents some novel mathematical insights that enhance our understanding of the dynamical results that we discussed in section\eqref{axial_math}. The inclusion of fractional powers in prey density transforms the growth equation into a transcendental form, increasing system complexity. We propose a simple yet intriguing mathematical scheme that facilitates the exploration of system dynamics and efficiently tackles these challenges. Moreover, detecting heteroclinic bifurcation is challenging, as the tracking of the heteroclinic orbit remains volatile, and standard methods for identifying these separatrices through trapping regions are difficult to estimate. But, the existence of heteroclinic bifurcation is crucial and significant in population biology. However, here we present a refined approach that accurately tracks orbit behavior near joining points of the two different separatrices, offering a precise method for detecting the heteroclinic curve and distinguishing it from the homoclinic one in section \eqref{heteroclinic_curve_sum}.
\section{Model formulation and biological rationale}\label{model formulation}
Understanding the impact of such trade-associated heterogeneous phenomena in a system is most conveniently done by studying its aggregated effect at the community level. In this work, we use the "degree of partial cooperation" to quantify the intensity of the aggregated outcome from this type of heterogeneous or resultant cost-associated cooperation. The term "degree of cooperation" is used in the article \citep{cheney2011extent} to describe the strength or intensity of cooperation among individuals, suggesting that it should be fractional or partially homogeneous. Mainly inspired by this definition, the term "degree of partial cooperation" will be used to quantify the level or intensity at which organisms cooperate, cluster, or group to enhance their fitness. This acknowledges that its effectiveness may be somewhat diminished compared to the classical assumption of cooperation, either due to associated costs or cognitive or selective cooperation approaches. But, the challenge remains in how to model it mathematically. \\
The impact and influence of predation fear on the survival strategy adopted by prey species remains a central focus of our study. Therefore, during the model formulation process,
 we consider the model proposed in \citep{wang2016modelling} as the foundational structure for the model we are going to develop which is
 \begin{align}\label{wangmodel_general}
      \frac{du}{dt}&= (\frac{r_0u}{1+Kv})-au^2-du-\phi(u)v,\\
      \frac{dv}{dt}&=c\phi(u)v-mv,\nonumber
\end{align}
where $ \phi(u) $ is the general functional response term, $ u $ represents the prey density and $ v $ represents predator density and the descriptions of the rest of the variables are mentioned in the Table\eqref{tab:parameters}.
 \begin{table}[h!]
     \centering
     \begin{tabular}{cc}
        $ r_0 $ = & birth rate of the prey species\\
        $K $ = &  label of fear  in prey species\\
        $a $ = &  label of intraspecific competition\\
        $d $ = &  natural death rate of the prey\\
         $\phi(u) $ =  &  general functional response term for the predator\\
          $ c $ =  &  conversion rate for the predator \\
          $ m $ =  & natural death rate of the predator\\
     \end{tabular}
    \caption{Description of the variables used in this study.}
     \label{tab:parameters}
 \end{table}
 To incorporate the aggregative effect of the heterogeneous strategy, we adopt a mean-field model approach inspired by the work \citep{ajraldi2011modeling,venturino2013spatiotemporal}, where the general model formulation structure was:
\begin{align*}
x'(t) &= ax\big(1-\frac{x}{K}\big)F(f, y) - dx - bx^2 - mx^\beta y^\theta ,\\
y'(t) &= nmx^\beta y^\theta - ey.
\end{align*}
Here $\beta$ and $\theta$ represent the intensity of group defense and mutual interference respectively. This approach, first introduced by Braza et al. \citep{braza2012predator} and later explored by Chattopadhyay et al. \citep{chattopadhyay2008patchy} in toxin-producing systems, provides a foundation for modeling cooperative dynamics. For additional insights into this concept, see the work of Mengxin He et al. \citep{he2022stability}. In all of these studies, authors explored how group leaving and anti-predator movement or herd behavior can severely affect predator-prey dynamics by mainly adopting the density-dependent power law approach to redefine the functional response and established that this approach can readily capture the aggregative dynamics at the individual level. But based on the preceding discussion in section\eqref{introduction}, it is clear that when prey species engage in cooperative strategies—whether for defense, offspring care, foraging efficiency, resource optimization, or fitness improvement—it directly or indirectly influences their per-capita reproduction rate \citep{hassell1975density,venturino2013spatiotemporal,xiao2019stability}. Additionally, the review also suggests that the effectiveness of cooperation-related costs and benefits, along with selective behavior intensity, is highly dependent on species density. These insights guide our study to explore the impact of cost-associated cooperative strategies through density-dependent prey reproduction rates, an area that remains largely unexplored but essential to understand. So this time our objective is to capture the collective impact of this cost-cooperation dilemma by studying the effective changes in prey reproduction. Hatton et al. \citep{hatton2015predator} analyzed 2260 global communities and found that prey per capita production follows a power law \(P = rB^k\), with \(k\) varying across trophic structures, where "r" denotes the prey production coefficient, \(P\) represents the rate of prey production, and \(B\) denotes the prey biomass. Inspired by this result and the preceding reviews, we will incorporate a power-law approach to introduce cost-associated cooperation by including the term "$x^N$" in the prey reproduction term, reducing the model \eqref{wangmodel_general} to
 \begin{align}\label{main_model_1}
 \begin{cases}
     & \frac{du}{dt}= (\frac{r_0u^N}{1+Kv})-au^2-du-\phi(u)v, \\ 
     & \frac{dv}{dt}=c\phi(u)v-mv,
     \end{cases}
 \end{align}
  where this $N$ is termed as the degree of partial cooperation. Incorporating a quadratic term of population density in the growth equation of a species acknowledges that as cooperation increases (i.e., as population density increases), the positive impact on growth or reproduction accelerates, often leading to mutual benefits. However, in this context, neither the associated costs nor the cognitive, strategy-based heterogeneous cooperation are assumed and cooperation is assumed to be homogeneously distributed among individuals. Nevertheless, it has been established that cooperation or group living entails both costs and potential benefits. Hence, individuals must assess the cost-benefit ratio before adopting group living or cooperative behavior \citep{alexander1974evolution,rubenstein2010cooperation}. This demonstrates that the degree of partial cooperation effectively captures the essence of density-dependent cooperation strategies. Hence in our study the range of $N$ is taken as $[1,2)$ \\
 Interestingly for $ N=1 $ our model will exactly take the form of the model \eqref{wangmodel_general}. Hamilton's Rule provides valuable insights into the evolution of cooperation in species. Evaluating the genetic relatedness among individuals aids in assessing the costs and benefits of cooperative actions \citep{rubenstein2010cooperation}. This model suggests that species compute the fitness advantages, whether direct or indirect, of engaging in cooperative or altruistic behaviors by comparing the discounted benefits to the associated costs. Crucially, species begin to cooperate even when aware of the costs, only when the benefit-cost ratio exceeds 1 which means stable cooperation may only emerge when the benefits outweigh the costs within a community \citep{rubenstein2010cooperation,smith1964group}, the exact biological phenomenon we aim to reflect through the term "degree of partial cooperation". Therefore, our assumption of selecting $N>1$ appears biologically relevant. Taking \(N < 2\) but \(N > 1\) in our study signifies that the degree of partial cooperation increases the growth of prey species, though at a rate somewhat diminished compared to the classical cooperative approach due to the associated cost. Moreover, setting \(N \geq 2\) in \eqref{main_model_1} results in the system losing its boundedness property, as proven later. These findings collectively provide a justifiable biological and mathematical rationale for our assumption.
 \begin{figure}[h!]
    \begin{center}
   \includegraphics[height=3.0in,width=5.5in]{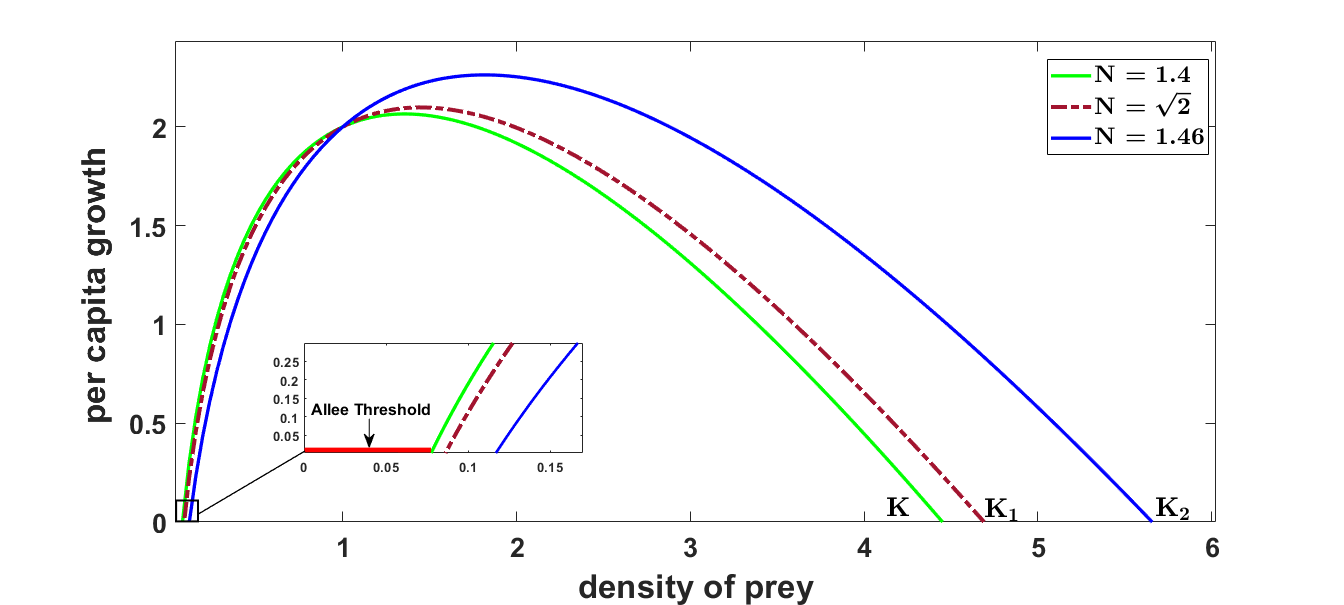}
    \end{center}
    \caption {This figure illustrates how the per capita growth rate changes with prey density in system \eqref{main_model_1} for various values of \(N\). As \(N\) increases from 1.4 (solid green curve) to \(\sqrt{2}\) (brown dashed curve) to 1.46 (solid blue curve), the system experiences a pronounced demographic Allee effect, with the threshold increasing concurrently (indicated by the bold red line). Additionally, as \(N\) rises, the system's carrying capacity (the right intersection of the per capita growth curve with the prey density axis) increases from \(K\) to \(K_1\) and then to \(K_2\). This suggests that prey can persist at higher densities with a higher degree of partial cooperation. The remaining parameter values are \(r_0 = 0.6\), \(a = 0.3\), and \(d = 0.1\).}
    \label{label fig 1}
    \end{figure}
 Throughout the paper, as demonstrated earlier, we are going to consider different types of functional response terms but let us first start with the linear functional response $ \phi(u)=pu $ where $ p $ represents the predation rate of the predator. So the main model turns into 
  \begin{align}\label{main_model_general}
 \begin{cases}
     & \frac{du}{dt}= \left(\frac{r_0u^\theta}{1+Kv}\right)u- au^2 - du - puv, \\ 
     & \frac{dv}{dt}=quv-mv,
     \\
     & \text{ where } q=cp \text{ and } 0\leq\theta<1.
     \end{cases}
 \end{align}
 Recent studies offer two definitions of carrying capacity: the classical definition and another that considers it as a stable equilibrium of diversity-dependent dynamics of species richness\citep{storch2019carrying}. In this work, we adopt the latter definition, defining the abscissa of the point $K$ as the carrying capacity for our system which we will discuss later.  
 \section{Possible Equilibria and The Criterion for Existence}
 In this section, we will discuss the condition of the existence of three types of equilibrium points. Which are,\\
 \begin{enumerate}
     \item Trivial Equilibrium  point =$(u_0,v_0)$ = $(0,0)$

     \item Semitrivial or axial Equilibrium points = $(\alpha ,0)$ 
           and $(\beta,0)$ 

     \item Nontrivial Equilibrium  = $(\Bar{u},\Bar{v})$, where ($\Bar{u},\Bar{v})$ satisfies the condition 
     \begin{equation}\label{nontrivial_fixed_condition}
    \begin{split}
           \Bar{u}=\frac{m}{q} 
 \hspace{0.4cm}\&  \hspace{0.4cm} 
        \left(\frac{r_0\Bar{u}^N}{1+K\Bar{v}}\right)-a\Bar{u}^2-d\Bar{u}-p\Bar{u}\Bar{v}=0. 
    \end{split}
    \end{equation}
 \end{enumerate}
While many studies opt for simplification by using reduced or linearized systems to bypass mathematical complexity, our approach is distinct. Instead of choosing such simplifications, we tackle the exact transcendental equations head-on, making the exact solution (axial equilibriums) unattainable. However, we develop a novel method to identify the conditions for root existence and leverage them to explore stability. This alternative approach proves both simple and ingenious.   
 \subsection*{Condition For the Existence of the Trivial Eqilibrium}
It is very easy to conclude that $(u_0,v_0) = (0,0)$ is the trivial equilibrium for the system and it will always exist in the system.
\subsection*{Condition For the Existence of the Axial Equilibria}\label{axial_math}
 However, due to the complexity of the model, it is challenging to find the exact form of the axial equilibria and the possible number of the equilibria. In fact, it doesn't seem easy to find their exact location. Hence, here we will use an alternative yet interesting way to overcome this problem. Here we excluded the case $N=1$ as in the case of $N=1$ the model \eqref{main_model_1} become identical with the work of \citep{wang2016modelling}
 \begin{theorem}
     For any real value of $N\in(1,2)$ the system will have exactly two axial equilibrium points one is in the left-hand part of the point $ (u^1,0) $ and the other one will be in the right-hand part.
 \end{theorem}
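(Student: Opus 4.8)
The plan is to reduce the existence question to locating the positive roots of a single auxiliary function, and then to exploit its concavity rather than to attempt to solve the transcendental equation in closed form. First I would set $v=0$ in the prey equation of \eqref{main_model_general}; this annihilates the predation and fear terms and leaves $r_{0}u^{N}-au^{2}-du=0$ (recall $N=\theta+1$, so $N\in(1,2)$ corresponds to $\theta\in(0,1)$). Factoring out the trivial root $u=0$, the nonzero axial equilibria are exactly the positive solutions of
\begin{equation*}
g(u):=r_{0}u^{N-1}-au-d=0,\qquad u>0,
\end{equation*}
which is precisely the per-capita growth curve drawn in Figure~\ref{label fig 1}. So the whole theorem becomes a statement about the number and placement of the positive zeros of $g$.

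The key structural observation, which bypasses the unsolvability of $g(u)=0$, is that $g$ is \emph{strictly concave} on $(0,\infty)$: the term $r_{0}u^{N-1}$ is strictly concave because $N-1\in(0,1)$, while $-au-d$ is affine. Equivalently $g''(u)=r_{0}(N-1)(N-2)u^{N-3}<0$ for all $u>0$. A strictly concave function can vanish at most twice, which already delivers the upper bound in ``exactly two'' and produces the single-humped graph seen in the figure. I would then pin down the boundary behaviour and the hump. Since $N-1>0$ we have $g(0^{+})=-d<0$; since $N-1<1$ the linear term dominates and $g(u)\to-\infty$ as $u\to\infty$. Solving $g'(u)=r_{0}(N-1)u^{N-2}-a=0$ gives a unique critical point
\begin{equation*}
u^{1}=\left(\frac{r_{0}(N-1)}{a}\right)^{\frac{1}{2-N}},
\end{equation*}
which by concavity is the global maximizer, with $g'>0$ on $(0,u^{1})$ and $g'<0$ on $(u^{1},\infty)$.

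To close the argument I would use $g'(u^{1})=0$ to eliminate the fractional power, giving the tidy expression $g(u^{1})=a\,u^{1}\,\frac{2-N}{N-1}-d$, so that $g(u^{1})>0$ is equivalent to the explicit parameter condition $u^{1}>\frac{d(N-1)}{a(2-N)}$. Under this nondegeneracy condition the intermediate value theorem applied separately on $(0,u^{1})$ and on $(u^{1},\infty)$, combined with the strict monotonicity of $g$ on each subinterval, yields exactly one root on each side of $u^{1}$: one axial equilibrium to the left of $(u^{1},0)$ and one to its right, exactly as the theorem asserts. Uniqueness in each interval follows from the strict sign of $g'$ there, and the global ``exactly two'' from concavity.

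The main obstacle I anticipate is not the counting step — concavity renders that essentially automatic — but the fact that $g$ is genuinely transcendental, so that $u^{1}$ and the two roots have no closed form and every quantitative claim (the location of the hump, the sign of the peak) must be read off from the qualitative shape rather than from an explicit solution. The delicate point that truly legitimizes the ``exactly two'' conclusion is establishing $g(u^{1})>0$ and identifying the parameter regime in which it holds; if the statement is intended to be uniform in $N\in(1,2)$, I would expect this positivity to be built in as a standing assumption on $r_{0},a,d$ (the regime exhibiting the demographic Allee effect of Figure~\ref{label fig 1}), and I would flag that dependence explicitly in the write-up.
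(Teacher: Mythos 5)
Your proposal is correct, and it reaches the conclusion by a genuinely different and in fact cleaner route than the paper. The paper splits into two cases according to whether $N$ is rational or irrational: for rational $N-1=p'/q'$ it substitutes $w=u^{1/q'}$ to turn the nullcline equation into a polynomial and invokes Descartes's rule of signs (giving ``two or zero'' positive roots), and for irrational $N$ it sandwiches the curve $r_0u^{N}-au^2-du$ between two rational-power curves; only then does it use the critical point $u^1$ and the concavity $f''<0$ to place the two roots on opposite sides of $u^1$. You dispense with the case split and with Descartes entirely: since $u\mapsto u^{N-1}$ is smooth and strictly concave on $(0,\infty)$ for \emph{any} real $N\in(1,2)$, strict concavity alone caps the number of zeros at two, and the sign pattern $g(0^+)=-d<0$, $g(u^1)>0$, $g(u)\to-\infty$ together with strict monotonicity on $(0,u^1)$ and $(u^1,\infty)$ gives exactly one root in each interval. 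This buys uniformity in $N$ (the rational/irrational dichotomy in the paper is doing no real work, since calculus does not care about rationality of the exponent) and it still delivers the two positional inequalities $\alpha<u^1<\beta$, i.e.\ the paper's \eqref{alpha_position}--\eqref{beta_position}, that are needed downstream for the stability analysis. One further point in your favor: both arguments hinge on the nondegeneracy condition $f(u^1)>0$, which the paper slips in with the phrase ``Hence $f(u^1)>0$ \ldots will ensure'' without ever stating it as a hypothesis (it cannot be proven, since for small $r_0$ or large $d$ the maximum is negative and there are no axial equilibria, consistent with Descartes's ``two or zero''). You make this assumption explicit and even convert it into the usable parameter condition $u^1>\frac{d(N-1)}{a(2-N)}$ via $g(u^1)=au^1\frac{2-N}{N-1}-d$, which is a genuine improvement in rigor over the paper's statement, whose ``for any real value of $N\in(1,2)$'' is, strictly speaking, false without that standing assumption.
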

 Firstly, we are going to find the number of axial equilibria and then we will discuss their possible positions. Let us consider $v=0$ and from the first equation of the system \eqref{main_model_1} we have
\begin{align}\label{x_nullcline}
    f(u)= r_0u^{N-1}-au-d=0. 
\end{align}
Now the number of equilibria of the system can be quantified as the possible no of point of intersection of the prey nontrivial nullcline with the line $v=0$.\\
Here, $N \in \mathbb{R}$ and $0\leq N-1<1$. Now, we are going to discuss our study in two cases.
\subsubsection*{Case 1}
When $ N \in \mathbb{Q}$, in this case we can assume,
\begin{align*}
    N-1=\frac{p'}{q'},\textit{ where } p',q'\in\mathbb{Z}, q'\neq 0 \textit{ and } p'<q'.
\end{align*}
 After simplification and rearrangement of the terms in the standard form of a polynomial, then \eqref{x_nullcline} becomes,
\begin{align} \label{rationals_pow_poly}
   f(w) = aw^{q'}-r_0w^{p'}+d=0,\textit{ where, }
   w = u^\frac{1}{q'}
\end{align}
As there are two changes in the sign of the coefficients of the polynomial $f(w)$ then by Descartes's rule of sign, we can conclude that either the polynomial has two real positive roots or no real positive root. Again the curve \eqref{x_nullcline} is continuously differentiable, for $u>0$ and also we have,
\begin{align*}
   f'(u)
    =&0 ,\textit{ when }u=\Bigg(\frac{a}{r_0(N-1)}\Bigg)^\frac{1}{N-2} = u^{1}(\textit{say})\\
   \textit{Hence,} f''(u)=&r_0 (N-1)(N-2) u^{N -3}<0 , u\in (0,\infty)
\textit{ which means }  f''(u^{1}) <0
\end{align*}
So, we can now conclude that the function will attain its maxima at $u=u^1$ and also the curve, represented by the equation \eqref{x_nullcline} is also concave downwards. Hence $ f(u^1)>0 $ also with the fact that $f(0)<0$ will ensure that the prey nontrivial nullcline can intersect the line $v=0$ exactly two distinct points $(\alpha,0)$ and $(\beta,0)$ respectively, laying two opposite sides of $(u^1,0)$ on the line $v=0$ with $\alpha<u^1<\beta$. Figure(\eqref{label fig 1}) clearly demonstrates this phenomenon. That means
\begin{align}
    \alpha^{N-2}>\Bigg(\frac{a}{r_0(N-1)}&\Bigg)   \label{alpha_position}\\
    \textit{and, }&\Bigg(\frac{a}{r_0(N-1)}\Bigg)>\beta^{N-2}     \label{beta_position} 
\end{align}
holds together. Now when $f(u^1)=0$ then the line $v=0$ will be a tangent to the curve \eqref{rationals_pow_poly}. Hence, there will be exactly one axial equilibrium. We will discuss this case later in the section on bifurcation analysis.
\subsubsection*{Case 2}
When $N \in \mathbb{R} - \mathbb{Q} $, then there always exists two rational numbers $ m,n $ (say),where $m<N<n$. Now if $ u>1,$
\begin{align*}
    r_0u^m-au-d 
    <r_0u^N-au-d  
    <r_0u^n-au-d 
\end{align*}
 and when $u\leq 1$ then,
\begin{align*}
     r_0u^m-au-d  
    \geq r_0u^N-au-d 
    \geq r_0u^n-au-d .
\end{align*}
Based on the previous findings, we can deduce that a curve with an irrational power will reside between two curves with rational powers. Figure(\eqref{label fig 1}) provides a clear graphical reference to illustrate this condition. As we have already proved in the case of rational values of $N$, we can establish that the curve with an irrational power will also intersect the same line at precisely two points. Hence we can now conclude that any real value of N with $1<N<2$ system will have two axial equilibria. Also, it is to be noted that the condition of maximality, in this case, will remain exactly the same as in the previous case, and hence the results \eqref{alpha_position} and \eqref{beta_position} will also be held for this case. These two inequalities are going to be pivotal for determining the stability of the axial equilibrium points.
\subsection*{Condition for the existence of the Interior Equilibrium Point}  
The nontrivial equilibrium point $(\Bar{u},\Bar{v})$ will satisfy the following conditions,
 \begin{align}\label{condition of non trivial equilibrium}
        \begin{cases}
           \Bar{u}=\frac{m}{q}\\
           \bigg(\frac{r_0\Bar{u}^N}{1+K\Bar{v}}\bigg)-a\Bar{u}^2-d\Bar{u}-p\Bar{u}\Bar{v}=0
         \end{cases}
 \end{align}
 that gives us,
 \begin{align}
 &(\frac{r_0\Bar{u}^N}{1+K\Bar{v}})-a\Bar{u}^2-d\Bar{u}- p\Bar{u}\Bar{v}=0\nonumber\\
 \textit{which means, }&
 \Bar{v} = \frac{-(aK\Bar{u}+p+Kd)+\sqrt{(aK\Bar{u}+p+Kd)^2+4pK(r_0\Bar{u}^{N-1}-a\Bar{u}-d)}}{2pK}\nonumber
  \end{align}
 So, the nontrivial equilibrium will exist if $\Bar{v}>0$, that means when,
  \begin{align*}
 r_0\Bar{u}^{N-1}-a\Bar{u}-d>0.
  \end{align*}
  Based on the aforementioned condition, we can conclude that if $f(\Bar{u})>0$, then the nontrivial equilibrium will exist and as the nontrivial predator nullcline is a vertical line then it can intersect the nontrivial prey nullcline at exactly one point when $\alpha \leq \frac{m}{q} \leq \beta$, which means the system has exactly one nontrivial equilibrium point.\\
If $f(u_1)>0$ and $\alpha \leq \frac{m}{q} \leq \beta$ holds together then that will imply $f(\bar{u})>0$.\\
So, from the previous discussion, we may now conclude that if $f(u_1)>0$ and $\alpha \leq \frac{m}{q} \leq \beta$ then both the axial equilibria and the nontrivial equilibrium will exist.\\ 
Now as a conclusion of the above discussion, we present a theorem below.
\begin{theorem}
    Both the axial equilibria and the nontrivial equilibrium will co-exist in the system if $f(u_1)>0$ and $\alpha \leq \frac{m}{q} \leq \beta$.But if $ \frac{m}{q} \notin [\alpha,\beta]$ but $f(u_1)>0$ then only the axial equilibria will exist.
\end{theorem}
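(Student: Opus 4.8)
\emph{Proof proposal.} The plan is to reduce the entire coexistence question to the sign of the one-variable function $f(u)=r_0u^{N-1}-au-d$ evaluated at the predator-nullcline abscissa $m/q$, and then to read that sign off from the concavity picture already established for $f$. First I would observe that the hypothesis $f(u^1)>0$ (where $u^1=u_1$ is the unique maximiser of $f$) is precisely the condition under which the argument of the preceding theorem produces the two axial equilibria $(\alpha,0)$ and $(\beta,0)$ with $\alpha<u^1<\beta$. Hence, whenever $f(u^1)>0$, both axial points exist regardless of the position of $m/q$; this settles the axial half of both assertions simultaneously.

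Next I would treat the interior point. The predator equation forces any equilibrium with $\bar v\neq0$ to satisfy $q\bar u-m=0$, i.e. $\bar u=m/q$. Substituting into the prey equation and clearing the factor $1+K\bar v$ yields the quadratic in $\bar v$ whose admissible root is
\begin{align*}
\bar v=\frac{-(aK\bar u+p+Kd)+\sqrt{(aK\bar u+p+Kd)^2+4pK(r_0\bar u^{N-1}-a\bar u-d)}}{2pK}.
\end{align*}
Isolating the radical and squaring shows that $\bar v>0$ is equivalent to the clean inequality $f(\bar u)>0$, i.e. $f(m/q)>0$. I would moreover note that the same quadratic has leading coefficient $pK>0$ and constant term $-f(\bar u)$, so its product of roots equals $-f(\bar u)/(pK)$; when $f(\bar u)>0$ this product is negative, giving exactly one positive root and hence exactly one interior equilibrium along the vertical line $u=m/q$.

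The decisive step is the geometric translation of $f(m/q)>0$ into membership in $[\alpha,\beta]$. Since $f$ is continuous and strictly concave on $(0,\infty)$ with $f(0)=-d<0$ and a single positive maximum at $u^1$, its sign pattern is forced to be negative--positive--negative, so that $f>0$ on $(\alpha,\beta)$, $f<0$ on $(0,\alpha)\cup(\beta,\infty)$, and $f(\alpha)=f(\beta)=0$. Therefore $f(m/q)>0$ exactly when $\alpha<m/q<\beta$, and $f(m/q)<0$ when $m/q\notin[\alpha,\beta]$. Combining this with the existence of the axial pair yields both conclusions: under $f(u^1)>0$ together with $\alpha\le m/q\le\beta$ the axial and interior equilibria coexist, whereas under $f(u^1)>0$ with $m/q\notin[\alpha,\beta]$ one has $\bar v<0$, so the interior point is not admissible and only the two axial equilibria survive.

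The main obstacle I anticipate is not analytic but a matter of boundary bookkeeping: at $m/q=\alpha$ or $m/q=\beta$ one has $f(m/q)=0$, whence $\bar v=0$ and the ``interior'' equilibrium degenerates onto an axial one. A genuinely positive interior equilibrium therefore requires the strict inequalities $\alpha<m/q<\beta$, and the weak inequalities in the statement should be understood with this degenerate endpoint case in mind. Every other ingredient---the concavity of $f$, the ordering $\alpha<u^1<\beta$, and the sign of the quadratic's constant term---is already available, so no fresh estimate is needed.
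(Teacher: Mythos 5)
Your proposal is correct and follows essentially the same route as the paper: both reduce the interior equilibrium's existence to positivity of the admissible root $\bar v$ of the quadratic $pK\bar v^2+(aK\bar u+p+Kd)\bar v-f(\bar u)=0$, which holds iff $f(m/q)>0$, and then use the concavity and sign structure of $f$ (negative--positive--negative, with zeros at $\alpha$ and $\beta$) to translate this into $m/q$ lying between $\alpha$ and $\beta$, while $f(u^1)>0$ alone guarantees the two axial equilibria. Your two refinements---the product-of-roots argument giving exactly one positive root, and the observation that at $m/q=\alpha$ or $m/q=\beta$ one has $\bar v=0$ so the ``interior'' equilibrium degenerates onto an axial one---are both sound and in fact sharpen the paper's statement, which uses weak inequalities without addressing that endpoint case.
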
        
\section{Stability Analysis of equilibria}
 In this section, we will examine the stability of equilibrium points by analyzing the sign of the eigenvalues of the Jacobian matrix. To begin, let us find the Jacobian matrix associated with equation \eqref{main_model_1}.
 \begin{align}\label{general_jacobian_1}
     J = \begin{bmatrix}
        J_{11}  &  J_{12}\\
        J_{21}  &  J_{22}
      \end{bmatrix}
 \end{align}
 where $J_{11}=(\frac{Nr_0u^{N-1}}{1+Kv})-2au-pv-d,J_{12}=(\frac{-Kr_0u^N}{(1+Kv)^2})-pu,J_{21}=qv  $ and $  J_{22}=qu-m $\\
 where it is assumed that $1<N<2$,
 \label{partial cooperation}
 \subsection*{For trivial equilibrium point}
     \begin{theorem}
         The trivial equilibrium point is always stable.
     \end{theorem}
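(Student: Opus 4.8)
The plan is to establish local asymptotic stability of the trivial equilibrium by the standard linearization argument: I will evaluate the Jacobian \eqref{general_jacobian_1} at $(u_0,v_0)=(0,0)$ and show that both of its eigenvalues have negative real part. Since the matrix will turn out to be diagonal, reading off the eigenvalues is immediate, and the entire argument collapses to a careful evaluation of each entry at the origin.

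First I would substitute $u=0$, $v=0$ into the four entries $J_{11},J_{12},J_{21},J_{22}$. The entries $J_{21}=qv$ and $J_{22}=qu-m$ trivially give $0$ and $-m$. The only genuinely delicate point is the behaviour of the fractional powers of $u$ appearing in $J_{11}$ and $J_{12}$, namely the term $\tfrac{Nr_0u^{N-1}}{1+Kv}$ in $J_{11}$ and the term $\tfrac{-Kr_0u^{N}}{(1+Kv)^2}$ in $J_{12}$. Here the hypothesis $1<N<2$ is exactly what makes the evaluation clean: because $N-1>0$ and $N>0$, we have $u^{N-1}\to 0$ and $u^{N}\to 0$ as $u\to 0^{+}$, so these terms vanish at the origin. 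Consequently $J_{11}=-d$ and $J_{12}=0$ at $(0,0)$, yielding
\begin{align*}
J\big|_{(0,0)}=\begin{bmatrix} -d & 0 \\ 0 & -m \end{bmatrix}.
\end{align*}

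With the Jacobian in diagonal form, its eigenvalues are simply $\lambda_1=-d$ and $\lambda_2=-m$. Since $d$ and $m$ are, respectively, the natural death rates of prey and predator and are therefore strictly positive, both eigenvalues are negative. The equilibrium is thus hyperbolic with two stable directions, so linearization determines the local dynamics and forces $(0,0)$ to be locally asymptotically stable. Because this conclusion is independent of the remaining parameters $r_0,a,p,q,K$, the stability holds unconditionally, which is precisely the assertion that the trivial equilibrium is \emph{always} stable. Biologically this is the fingerprint of the super-linear reproduction term $r_0u^{N}$: near the origin the per-capita growth vanishes faster than the linear mortality $-du$, so the death term dominates and the population cannot escape extinction from arbitrarily low densities—the Allee-type mechanism already visible in Figure~\eqref{label fig 1}.

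The step I expect to require the most care—indeed the only subtle point—is justifying the vanishing of the fractional-power terms together with the one-sided differentiability of the right-hand side at the origin. Because $0<N-1<1$, the map $u\mapsto u^{N}$ has derivative $Nu^{N-1}\to 0$ as $u\to 0^{+}$, so the vector field is differentiable at $u=0$ with exactly the limiting value used above; this is why the borderline case $N=1$ (where $0^{0}$ is ambiguous and the model degenerates to \eqref{wangmodel_general}) has been excluded, and why the unbounded regime $N\ge 2$ is treated separately. Once this limiting evaluation is granted, no sign condition on the other parameters is needed and the result follows for every admissible $N\in(1,2)$.
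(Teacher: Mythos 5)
Your proposal is correct and follows essentially the same route as the paper: evaluate the Jacobian at the origin, obtain the diagonal matrix $\begin{bmatrix} -d & 0 \\ 0 & -m \end{bmatrix}$, and conclude local asymptotic stability from the negative eigenvalues $-d$ and $-m$. Your additional justification that the fractional-power terms $u^{N-1}$ and $u^{N}$ vanish as $u\to 0^{+}$ (so the linearization is legitimate for $1<N<2$) is a point the paper takes for granted, but it does not change the argument's substance.
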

 \begin{proof}
      At the trivial equilibrium point $E_0 = (u_0,v_0)$, the jacobian \eqref{general_jacobian_1} will be,
  \begin{align}\label{trivial_jacobian_1.1}
     J_{(u_0,v_0)}=J_0 =
     \begin{bmatrix}
        -d  &  0\\
        0  &  -m
      \end{bmatrix}
 \end{align}\\
 Hence trace $ J_{0}=-d-m<0 $ and det $J_{0}=dm$ , that means one of the eigen values are negative .So we can say that the fixed point is stable asymptotically if m and d remain nonzero.  
 \end{proof}   
    \subsection*{For axial equilibrium points }
    \begin{theorem}
       The axial equilibrium $(\alpha,0)$ will be, 
     \begin{enumerate}[label=(\alph*)]
         \item  a  saddle point if only if $\alpha<\frac{m}{q}$.
         \item an unstable node if and only if $\alpha>\frac{m}{q}$ 
     \end{enumerate}
    \end{theorem}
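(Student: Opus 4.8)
The plan is to exploit the triangular structure of the Jacobian at the axial equilibrium and then reduce the sign analysis of the two eigenvalues to the existence inequalities already established. First I would evaluate \eqref{general_jacobian_1} at $(\alpha,0)$. Setting $v=0$ kills the lower-left entry, since $J_{21}=qv=0$, so the matrix becomes upper triangular and its eigenvalues are simply the diagonal entries
\begin{align}
\lambda_1 = Nr_0\alpha^{N-1}-2a\alpha-d, \qquad \lambda_2 = q\alpha-m. \nonumber
\end{align}
The second eigenvalue is immediate: $\lambda_2<0$ exactly when $\alpha<\frac{m}{q}$ and $\lambda_2>0$ exactly when $\alpha>\frac{m}{q}$. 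So the entire classification hinges on pinning down the sign of $\lambda_1$.

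The key step is to show that $\lambda_1>0$ unconditionally. Since $\alpha$ is a root of \eqref{x_nullcline}, we have the relation $r_0\alpha^{N-1}=a\alpha+d$, which I would substitute into $\lambda_1$ to collapse the transcendental term:
\begin{align}
\lambda_1 = N(a\alpha+d)-2a\alpha-d = (N-2)a\alpha+(N-1)d. \nonumber
\end{align}
Because $1<N<2$, this is a sum of a negative and a positive contribution, so its sign is not obvious from the expression alone. The crux is to invoke the position inequality \eqref{alpha_position}, namely $\alpha^{N-2}>\frac{a}{r_0(N-1)}$. Multiplying through by $r_0\alpha>0$ gives $r_0\alpha^{N-1}>\frac{a\alpha}{N-1}$, and feeding in the root relation $r_0\alpha^{N-1}=a\alpha+d$ yields $a\alpha+d>\frac{a\alpha}{N-1}$. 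Clearing the positive factor $(N-1)$ rearranges precisely to $(N-2)a\alpha+(N-1)d>0$, i.e.\ $\lambda_1>0$.

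This is the heart of the argument and the one place where the earlier existence geometry --- that $\alpha$ lies strictly to the left of the maximizer $u^{1}$ --- does real work; I expect this to be the main obstacle, in the sense that the positivity of $\lambda_1$ is not visible until \eqref{alpha_position} is combined with the nullcline relation in exactly this way. Everything after that is bookkeeping on $\lambda_2$. With $\lambda_1>0$ fixed, when $\alpha<\frac{m}{q}$ the two eigenvalues have opposite signs, giving a saddle, which proves (a); when $\alpha>\frac{m}{q}$ both eigenvalues are positive, giving an unstable node, which proves (b). Both directions of each ``if and only if'' are covered simultaneously, since the sign of $\lambda_2$ is strictly monotone in $\alpha$, so no separate converse argument is needed.
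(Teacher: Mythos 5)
Your proposal is correct and follows essentially the same route as the paper: exploit the upper-triangular Jacobian at $(\alpha,0)$, and establish $\lambda_1>0$ by combining the nullcline relation $r_0\alpha^{N-1}=a\alpha+d$ with the position inequality \eqref{alpha_position}, after which the sign of $\lambda_2=q\alpha-m$ dictates saddle versus unstable node. Your intermediate algebra ($\lambda_1=(N-2)a\alpha+(N-1)d$) differs only cosmetically from the paper's ($\lambda_1=(N-1)r_0\alpha^{N-1}-a\alpha$); the two manipulations are equivalent uses of the same two facts.
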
     
     \begin{proof}
         At the equilibrium point is $E_1=(\alpha,0)=(u^{*},0)$ (say), the Jacobean  matrix \eqref{general_jacobian_1} will take the form 
 \begin{align}\label{boundary_jacobian_2.1}
   J_{(\alpha,0)}=J_1=  \begin{bmatrix}
     Nr_0u^{*N-1}-2au^*-d  &  -Kr_0u^{*N}-pu^*\\
     0  &   qu^*-m
     \end{bmatrix}
     \end{align}   
 \text{and also we have, }     
 \begin{align}     
       r_0u^{*N-1}-au^*-d=0 \label{condition_axial1}
 \end{align}
 at the equilibrium point $E_1$.\\
 Now the eigenvalues of the jacobian are $\lambda_1=Nr_0u^{*N-1}-2au^*-d$ and $\lambda_2=qu^*-m$ respectively and also by using the results \eqref{alpha_position} and \eqref{condition_axial1}
 \begin{align}
 \lambda_1=& Nr_0u^{*N-1}-2au^*-d
          =Nr_0u^{*N-1}-au^*-r_0u^{*N-1}\nonumber \\
          >&(N-1)ru^*\frac{a}{r(N-1)}-au^* >0\nonumber
 \end{align}
 and the other eigen value say $\lambda_2<0 $ if and only if  $u^*<\frac{m}{q}$.\\
 Hence if  $u^*<\frac{m}{q}$ the axial fixed point will be a saddle point.\\
Similarly,$\lambda_2>0 $ if and only if  $u^*>\frac{m}{q}$, and at this time the fixed point will be an unstable node.\\
Now, if $u^*=\frac{m}{q}$ then $\lambda_2=0$ and $\lambda_1>0$, implies that the fixed point will be a hyperbolic repeller.\\
     \end{proof}     
 \begin{theorem}
     The axial equilibrium $(\beta,0)$ will be 
     \begin{enumerate}[label=(\alph*)]
         \item a stable node if only if $\beta<\frac{m}{q}$.
         \item an saddle node if  $\beta>\frac{m}{q}$.
     \end{enumerate}
 \end{theorem}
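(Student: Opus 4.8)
The plan is to mirror the argument just used for the equilibrium $(\alpha,0)$, exploiting the fact that at any axial point the Jacobian \eqref{general_jacobian_1} becomes upper triangular (since $J_{21}=qv=0$ when $v=0$). Consequently the eigenvalues are simply the two diagonal entries, and I would read them off directly as
\begin{align*}
\lambda_1 = Nr_0\beta^{N-1}-2a\beta-d \qquad\text{and}\qquad \lambda_2 = q\beta-m,
\end{align*}
after substituting $u^{*}=\beta$ into \eqref{boundary_jacobian_2.1}. Since the stability type is dictated entirely by the signs of these two quantities, the whole proof reduces to a sign analysis.

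The first step is to pin down the sign of $\lambda_1$ using the defining relation of the axial equilibrium. Because $(\beta,0)$ satisfies \eqref{condition_axial1}, namely $r_0\beta^{N-1}-a\beta-d=0$, I would eliminate $d$ to obtain the simplified form $\lambda_1=(N-1)r_0\beta^{N-1}-a\beta$. The crucial input is now the positional inequality \eqref{beta_position}, which states $\tfrac{a}{r_0(N-1)}>\beta^{N-2}$. Multiplying this by the positive quantity $r_0(N-1)\beta$ yields $a\beta>r_0(N-1)\beta^{N-1}$, so that $\lambda_1<0$. This is precisely the reversal of what happened at $\alpha$, where \eqref{alpha_position} forced $\lambda_1>0$; here the right-hand intersection point $\beta$ sits on the decreasing branch of $f$, which is exactly the geometric reason the sign flips.

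The second step is the trivial sign analysis of $\lambda_2=q\beta-m$: it is negative precisely when $\beta<\tfrac{m}{q}$ and positive precisely when $\beta>\tfrac{m}{q}$. Combining the two, when $\beta<\tfrac{m}{q}$ both eigenvalues are negative and $(\beta,0)$ is a stable node, giving part (a); when $\beta>\tfrac{m}{q}$ one has $\lambda_1<0<\lambda_2$, so the point is a saddle, giving part (b). I would close by remarking on the boundary case $\beta=\tfrac{m}{q}$, where $\lambda_2=0$ and $\lambda_1<0$ renders the point non-hyperbolic, to be treated in the bifurcation section.

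I do not anticipate a serious obstacle, since the triangular structure makes the eigenvalues explicit; the only point requiring care is the correct invocation of \eqref{beta_position} (as opposed to \eqref{alpha_position}) and tracking the direction of the inequality through the multiplication by $\beta$, which is where the contrast with the $(\alpha,0)$ case lives. I would also flag the mild terminological imprecision in the statement of part (b): the eigenvalue signs give a genuine \emph{saddle}, and "saddle node" should be read in that sense rather than as a degenerate saddle-node bifurcation point.
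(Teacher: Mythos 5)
Your proposal is correct and follows essentially the same route as the paper: read off the eigenvalues from the upper-triangular Jacobian, eliminate $d$ via the equilibrium relation \eqref{condition_axial2} to get $\lambda_1=(N-1)r_0\beta^{N-1}-a\beta$, invoke \eqref{beta_position} to conclude $\lambda_1<0$, and then classify by the sign of $\lambda_2=q\beta-m$. Your closing remarks on the non-hyperbolic case $\beta=\tfrac{m}{q}$ and on reading ``saddle node'' as a genuine saddle are reasonable clarifications of points the paper states more loosely, but the substance of the argument is identical.
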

     \begin{proof}
   Here the equilibrium point is $E_2=(\beta,0)=(u',0)$ (say) and the Jacobean  matrix \eqref{general_jacobian_1} will take the form 
    \begin{align}\label{boundary2_jacobian_2.1}
   J_1=  \begin{bmatrix}
     Nr_0u'^{N-1}-2au'-d  &  -Kr_0u^{'N}-pu'\\
     0  &   qu'-m
     \end{bmatrix}
     \end{align}
     \text{and also we have, }     
 \begin{align}      
       r_0u'^{N-1}-au'-d=0 \label{condition_axial2} \textit{ and } \Bigg(\frac{a}{r_0(N-1)}\Bigg)>\beta^{N-2} 
 \end{align}
 at the equilibrium point $E_2$.\\
 Now the eigenvalues of the jacobian are $\lambda'_1=Nr_0u'^{N-1}-2au'-d$ and $\lambda'_2=qu'-m$ respectively and also using the results \eqref{beta_position} and \eqref{condition_axial2} we can have,
\begin{align}
 \lambda'_1=& Nr_0u'^{N-1}-2au'-d
          = Nr_0u'^{(N-1)}-au'-r_0u'^{(N-1)} \nonumber \\
          <&(N-1)ru'\frac{a}{r(N-1)}-au' <0\nonumber
 \end{align}
 and the other eigen value is $\lambda'_2<0 $ if and only if  $u'<\frac{m}{q}$. Hence when the strict inequality holds, the predator-free equilibrium will be a stable node, and at the time of equality, the equilibrium will be a hyperbolic attractor. Similarly,$\lambda'_2>0 $ if and only if  $u'>\frac{m}{q}$, and at this time the fixed point will be an unstable fixed point it will be a saddle-node.
 \end{proof}
\subsection*{For the nontrivial equilibrium}
In the prior section, the nontrivial equilibrium point $(\bar{u},\bar{v})$ adheres to the condition \eqref{nontrivial_fixed_condition}. Additionally, if $f(u_1) > 0$ and the inequalities $\alpha<\frac{m}{q}<\beta$ are both satisfied, the system will exhibit a single nontrivial equilibrium. We will now delve into the stability analysis of this fixed point using the subsequent theorems.
  \begin{theorem}\label{condition of existence of interior fixed point} 
  The non-trivial equilibrium will be 
  \begin{enumerate}[label=(\alph*)]
    \item a stable focus if $N<1+\frac{a\bar{u}}{a\bar{u}+p\bar{v}+d}$,
    \item will be an unstable fixed point if $N>1+\frac{a\bar{u}}{a\bar{u}+p\bar{v}+d}$
  \end{enumerate}
\end{theorem}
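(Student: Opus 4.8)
The plan is to read the stability of $(\bar{u},\bar{v})$ directly off the trace and determinant of the Jacobian \eqref{general_jacobian_1} evaluated there. First I would exploit the two equilibrium relations in \eqref{condition of non trivial equilibrium}. The relation $\bar{u}=m/q$ immediately forces $J_{22}=q\bar{u}-m=0$, so the bottom-right entry vanishes. To tame the transcendental term in $J_{11}$, I would divide the second equilibrium equation by $\bar{u}>0$ to obtain $\frac{r_0\bar{u}^{N-1}}{1+K\bar{v}}=a\bar{u}+d+p\bar{v}$, and substitute this into $J_{11}=\frac{Nr_0\bar{u}^{N-1}}{1+K\bar{v}}-2a\bar{u}-p\bar{v}-d$. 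This collapses the power-law/fear term and leaves the clean expression $J_{11}=(N-1)(a\bar{u}+d+p\bar{v})-a\bar{u}$.

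Since $J_{22}=0$, the trace reduces to $\operatorname{tr}J=J_{11}$ and the determinant to $\det J=-J_{12}J_{21}$. The key structural step is fixing the sign of $\det J$. Here $J_{21}=q\bar{v}>0$, while $J_{12}=-\frac{Kr_0\bar{u}^N}{(1+K\bar{v})^2}-p\bar{u}<0$ because both summands are negative; hence $\det J=-J_{12}J_{21}>0$ \emph{unconditionally}. This guarantees the interior point is never a saddle, so its type is governed entirely by the sign of $\operatorname{tr}J=J_{11}$.

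It then remains to solve the scalar inequality $J_{11}<0$. From the simplified form, $(N-1)(a\bar{u}+d+p\bar{v})<a\bar{u}$, and dividing by the positive quantity $a\bar{u}+d+p\bar{v}$ gives exactly $N<1+\frac{a\bar{u}}{a\bar{u}+d+p\bar{v}}$, which is condition (a); reversing the inequality yields (b). Combined with $\det J>0$, a negative trace makes the equilibrium asymptotically stable and a positive trace makes it unstable, establishing the claimed dichotomy.

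The main obstacle I anticipate is justifying the word \emph{focus} rather than merely \emph{stable}: a focus requires complex eigenvalues, i.e. $(\operatorname{tr}J)^2<4\det J$, which the trace-sign argument alone does not deliver. What is straightforward is the local picture near the threshold $N=1+\frac{a\bar{u}}{a\bar{u}+d+p\bar{v}}$, where $\operatorname{tr}J$ is small while $\det J$ stays bounded away from zero, forcing $(\operatorname{tr}J)^2-4\det J<0$ and hence a genuine spiral. Covering the entire range with the focus classification would need an additional discriminant estimate (or an implicit parameter restriction); absent that, I would either prove the focus claim only in a neighborhood of the stability boundary or read the theorem's labelling as describing the dynamics there, with the rigorous content being the trace-sign dichotomy above.
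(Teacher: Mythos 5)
Your proposal is correct and follows essentially the same route as the paper: the same Jacobian, the same use of $\bar{u}=m/q$ to make $J_{22}$ vanish, the same substitution of the equilibrium relation $\frac{r_0\bar{u}^{N-1}}{1+K\bar{v}}=a\bar{u}+d+p\bar{v}$ to reduce the trace to $(N-1)(a\bar{u}+d+p\bar{v})-a\bar{u}$ (identical to the paper's form $(N-2)a\bar{u}+(N-1)(p\bar{v}+d)$), the same unconditional positivity $\det J=-J_{12}J_{21}>0$, and the same trace-sign dichotomy yielding the threshold $N'_0=1+\frac{a\bar{u}}{a\bar{u}+p\bar{v}+d}$. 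The one point of divergence is the focus claim. The paper tries to settle it globally: it solves $(\operatorname{tr}J_2)^2=4\det J_2$ for $N$, keeps the root $N=1+\frac{a\bar{u}+2\sqrt{\det J_2}}{a\bar{u}+p\bar{v}+d}$, and concludes that since $N'_0$ lies below this value, every stable $N$ gives complex eigenvalues. But that argument is one-sided: the condition $(\operatorname{tr}J_2)^2<4\det J_2$ means $\left|\operatorname{tr}J_2\right|<2\sqrt{\det J_2}$, so the quadratic in $N$ also has a lower root $1+\frac{a\bar{u}-2\sqrt{\det J_2}}{a\bar{u}+p\bar{v}+d}$, and if that root exceeds $1$ the eigenvalues are real and negative for smaller $N$, giving a stable node rather than a focus. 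Hence your caution about the word \emph{focus} is well founded: your neighborhood-of-the-threshold argument (trace small, determinant bounded away from zero) is precisely the rigorous content here, and it is in fact more careful than the paper's own justification of the focus label. The trace--determinant dichotomy, which is the substance of parts (a) and (b), you have established completely and in the same way as the paper.
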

 \begin{proof}
  At the point $\Bar{E} =(\Bar{u},\Bar{v})$ the jacobian \eqref{general_jacobian_1} will take the form, 
\begin{align}\label{nontrivial_jacobian}
J_{(\bar{u},\bar{v})}=J_{2}=\begin{bmatrix}
  (\frac{nr_0\Bar{u}^{N-1}}{1+K\Bar{v}})-2a\Bar{u}-p\Bar{v}-d & -\frac{Kr_0\Bar{u}^N}{(1+K\Bar{v})^2}-p\Bar{u}\\
  q\Bar{v}  &  0 
  \end{bmatrix}
\end{align}
  Now $detJ_{2}=(\frac{Kr_0\Bar{u}^N}{(1+K\Bar{v})^2}+p\Bar{u})(q\Bar{v})>0$ 
  and $Trace J_{2}=(\frac{Nr_0\Bar{u}^{N-1}}{1+K\Bar{v}})-2a\Bar{u}-p\Bar{v}-d$.
  After some simplification we have,
  \begin{align*}
  Trace J_{2}
    =(\frac{Nr_0\Bar{u}^{N-1}}{1+K\Bar{v}})-2a\Bar{u}-p\Bar{v}-d\\
=(N-2)a\Bar{u}+(N-1)(p\Bar{v}+d) 
  \end{align*}
  So, $Trace J_{2}<0$ only when
  \begin{align*}
      &(2-N)a\Bar{u}>(N-1)(p\Bar{v}+d)
   \end{align*}
     which implies that,
     \begin{align}\label{condition_nontriv_stable_1} 
     N<1+\frac{a\bar{u}}{a\bar{u}+p\bar{v}+d}= 1+\frac{1}{1+\bigg(\frac{p\bar{v}}{a\bar{u}}\bigg)+\frac{d}{a\bar{u}}}
      =N'_0(\textit{say})   
  \end{align}
  Now to find the nature of the stability at $N=N'_0$ first, we take,
  \begin{align*}
       (Trace(J_{2}))^2-4det(J_{2})=0 \textit{ when, }
     (&(N-2)a\Bar{u}+(N-1)(p\Bar{v}+d) )^2=4det(J_2)\\
    \textit{which implies, } N = &1+\frac{a\bar{u}+\sqrt{4det(J_2)}}{a\bar{u}+p\bar{v}+d}.
  \end{align*}
  Hence,$(Trace(J_{2}))^2-4det(J_{2})< 0$ if 
  \begin{align}\label{condition of focus}
  N<1+\frac{a\bar{u}+\sqrt{4det(J_2)}}{a\bar{u}+p\bar{v}+d}
  \end{align} 
  So, by comparing the conditions \ref{condition_nontriv_stable_1} and \ref{condition of focus} we can now conclude that when the interior equilibrium will be stable, that means when, 
  \begin{align*}
     N<N'_0=1+\frac{a\bar{u}}{a\bar{u}+p\bar{v}+d}<1+\frac{a\bar{u}+\sqrt{4det(J_2)}}{a\bar{u}+p\bar{v}+d} \label{condition_nontriv_stable_2}
  \end{align*}
 When stability conditions are met, the nature of the coexistence equilibrium is of focus type.
\end{proof} 
 \section{Bifurcation Analysis} 
  \subsection{Transcritical Bifurcation}
 We've established mathematically that the axial equilibrium is stable when \( m > q\beta \). However, at \( m = q\beta \), the predator nullcline intersects with the prey nullcline at the axial equilibrium \((\beta,0)\). As \( m \) decreases further, the axial equilibrium becomes unstable, and a new stable equilibrium emerges (the interior equilibrium). This phenomenon is termed the transcritical bifurcation in bifurcation theory. Here, we will provide a mathematical proof of this bifurcation using Sotomayor's Theorem \citep{perko2013differential} at the point \( P_1 = (\beta, 0) \).
 \begin{theorem}
 The axial equilibrium $(\beta,0)$ will go through a transcritical bifurcation at $m=q\beta$.
 \end{theorem}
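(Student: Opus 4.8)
The plan is to invoke Sotomayor's theorem for transcritical bifurcation at the point $P_1=(\beta,0)$ with $m$ as the bifurcation parameter and critical value $m_0=q\beta$. Write the vector field as $F=(F_1,F_2)^T$ with $F_1=\frac{r_0u^N}{1+Kv}-au^2-du-puv$ and $F_2=quv-mv$. The first step is to confirm that we are genuinely in Sotomayor's setting: from the axial stability result for $(\beta,0)$, at $m=q\beta$ the eigenvalue $\lambda'_2=q\beta-m$ vanishes while $\lambda'_1=Nr_0\beta^{N-1}-2a\beta-d<0$. Hence the Jacobian $J_1$ at $(\beta,0)$ has a simple zero eigenvalue (the other being strictly negative), which is exactly the hypothesis required.

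Next I would compute the two eigenvectors. Since $J_1$ is upper triangular with a zero bottom row, the right null vector is $V=(-J_{12}/\lambda'_1,\,1)^T$ with $J_{12}=-Kr_0\beta^N-p\beta$, and the left null vector of $J_1^T$ is simply $W=(0,1)^T$. I would then verify the three Sotomayor conditions in turn. For the saddle-node--exclusion condition, $F_m=(0,-v)^T$ vanishes at $v=0$, so $W^TF_m(P_1;q\beta)=0$. For the transversality condition, $DF_m=\mathrm{diag}(0,-1)$ gives $W^T\big(DF_m\,V\big)=-V_2=-1\neq0$. For the nondegeneracy condition, because $W=(0,1)^T$ only the Hessian of $F_2$ enters; as $F_2=quv-mv$ has $\partial^2F_2/\partial u\,\partial v=q$ and all other second partials zero, one obtains $W^T\,D^2F(P_1;q\beta)(V,V)=2qV_1V_2=-2qJ_{12}/\lambda'_1$, which is nonzero since $J_{12}\neq0$ and $\lambda'_1\neq0$.

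The feature that makes this tractable---and the point I would emphasize---is that the transcendental, fractional-power term $r_0u^N/(1+Kv)$ lives entirely in $F_1$, while the left eigenvector $W$ has vanishing first component. Consequently $F_1$ and all of its (otherwise messy) derivatives are annihilated in every Sotomayor expression, and the whole computation collapses onto the clean bilinear form $F_2=quv-mv$. So rather than a calculation, the genuine work is structural: establishing that the zero eigenvalue is simple (already guaranteed by $\lambda'_1<0$ from the earlier axial theorem) and that $J_{12}\neq0$. The latter is immediate because $J_{12}=-(Kr_0\beta^N+p\beta)<0$ for $\beta>0$.

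Finally, I would record the signs to confirm the bifurcation is a true transcritical exchange of stability rather than a degenerate case: with $J_{12}<0$ and $\lambda'_1<0$ we have $V_1=-J_{12}/\lambda'_1<0$, so the nondegeneracy quantity $2qV_1V_2=2qV_1<0$. The main---and essentially only---obstacle I anticipate is bookkeeping the correct statement of Sotomayor's conditions and ensuring the zero eigenvalue's simplicity; once the left eigenvector is pinned down as $(0,1)^T$, the nonlinearity in the growth term becomes irrelevant and all three conditions follow from short polynomial derivatives of $F_2$.
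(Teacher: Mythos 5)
Your proposal is correct and follows essentially the same route as the paper's own proof: both invoke Sotomayor's theorem at $(\beta,0)$ with $m_0=q\beta$, use the left null vector $(0,1)^T$ and right null vector proportional to $(-J_{12}/\lambda'_1,\,1)^T$, and verify the identical three conditions $W^TF_m=0$, $W^T(DF_m V)=-V_2\neq 0$, and $W^TD^2F(V,V)=2qV_1V_2\neq 0$. Your added remarks — that the simplicity of the zero eigenvalue follows from $\lambda'_1<0$, and that the left eigenvector annihilates the transcendental growth term so only $F_2$ contributes — are sound refinements of the same argument rather than a different approach.
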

\begin{proof}
For detailed proof see Appendix\eqref{APP_trans} 
\end{proof}
  \subsection{ Hopf bifurcation }\label{APPENDIX-E}  
The Hopf bifurcation is a two-dimensional bifurcation that can occur in a system when a critical point changes its stability due to a change in a system parameter. This bifurcation is characterized by the occurrence or disappearance of a limit cycle. In our analysis, we will consider parameter N as the bifurcation parameter while keeping the other parameters constant.\\
The theorem stated below will give the necessary and sufficient condition for the occurrence of the Hopf bifurcation at the interior equilibrium $(\Bar{u},\Bar{v})$.Before proceeding to our main result, for the sake of simplicity, we will introduce new variables into our system. Which are\\
 $ v_{1}=Kv,u_{1}=\frac{mu}{q}$ and $\frac{d\Bar{t}}{dt}=\frac{m}{1+Kv}$.\\    
  By putting these variables in the equation \eqref{main_model_1},our system will transform into ,\\
  \begin{align}
     \begin{cases}
      \frac{du_{1}}{d\Bar{t}}=&a_0u_{1}^N-(a_1u_1^2+a_2u_1+a_3u_1v_1)(1+v_1)\\
       \frac{dv_{1}}{d\Bar{t}}=&v_{1}(u_{1}-1)(1+v_{1})
       \end{cases}     
  \end{align}
 where,$a_0=\frac{r_0m^{N-2}}{q^{N-1}},a_1=\frac{a}{q},a_2=\frac{d}{m},a_3=\frac{p}{km}$ and all are positive.\\
Now that we're familiar with the notations and concepts introduced thus far, we'll proceed with dropping the bar notations. Therefore, our new system will be
   \begin{align}\label{normal form_1}
     \begin{cases}
       \frac{du}{dt}=&a_0u^N-(a_1u^2+a_2u+a_3uv)(1+v)\\
       \frac{dv}{dt}=&v(u-1)(1+v)
       \end{cases}
  \end{align}
  and the equilibrium points of the system are $(0,0),(u',0),(u'',0) $ and $(1,\Bar{v})$ where $u'$ or $u''$ will satisfy the condition $a_0u^N-1-a_1u-a_2=0$ and 
      $\Bar{v}=\frac{-(a_3+a_2+a_1)+\sqrt{[a_3-(a_1+a_2)]^2+4a_0a_3}}{2a_3}$
  Now, jacobian of the above system is given by,
  \begin{align*}
  J'_2=\begin{bmatrix}
  Na_0u^{N-1}-(1+v)(2a_1u+a_2+a_3v) & -a_3u-(a_1u^2+a_2u)-2a_3uv\\
  v(1+v) & 0
  \end{bmatrix}
  \end{align*}
  and at $(1,\Bar{v})$ the jacobian will be 
  \begin{align}\label{new_normal_2}
  J''_2=\begin{bmatrix}
Na_0-(1+\Bar{v})(2a_1+a_2+a_3\Bar{v}) & -a_3-(a_1+a_2)-2a_3\Bar{v}\\
  \Bar{v}(1+\Bar{v}) & 0
  \end{bmatrix} 
  =\begin{bmatrix}
  J'_{11}&J'_{12}\\
  J'_{21}&J'_{22}
  \end{bmatrix}
  \end{align}
 Hence,
 \begin{align*}
     trace(J''_2)=&Na_0-(1+\Bar{v})(2a_1+a_2+a_3\Bar{v})\\
     det(J''_2)=&(a_3+a_1+a_2+2a_3\Bar{v})(\Bar{v}(1+\Bar{v}))>0
\end{align*}
Hence, the characteristic equation of the system will be,
\begin{align*}
    (\omega_H)^2-trace(J''_2)\omega_H+det(J''_2)=0\\
    \implies \omega_H=\frac{trace(J''_2)\pm\sqrt{[trace(J''_2)]^2-4det(J''_2)}}{2}.
\end{align*}
dynamics in the presence of varying degrees of partial cooperation and different levels of predation fear. Drawing
inspiration from the methodology of Morozov et al. (Adamson and Morozov, 2013), we explored how variations in different forms of the functional response term influence the system’s behavior starting with type-II functional response.
Before that, we investigate system dynamics with both one and bi-parameter bifurcation analysis for type-II functional
response separately. It shows a topologically similar outcome to that for the type-I functional response when taking
\begin{theorem}
 When interior equilibrium exists, then Hopf bifurcation will occur at $(\Bar{u},\Bar{v})$ if and only if $N=1+\frac{a_1}{a_1+a_2+a_3\Bar{v}}$.   
\end{theorem}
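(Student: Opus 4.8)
\emph{Proof proposal.} The plan is to apply the standard planar Hopf criterion to the Jacobian $J_2''$ in \eqref{new_normal_2}: a Hopf bifurcation at $(\bar{u},\bar{v})=(1,\bar{v})$ requires (i) a pair of purely imaginary eigenvalues, equivalently $\operatorname{trace}(J_2'')=0$ together with $\det(J_2'')>0$, and (ii) the transversality condition that the eigenvalues cross the imaginary axis at nonzero speed as the bifurcation parameter $N$ is varied. Since the determinant has already been computed as $\det(J_2'')=(a_1+a_2+a_3+2a_3\bar{v})\,\bar{v}(1+\bar{v})>0$ for every admissible parameter choice, the determinant never obstructs the bifurcation, so the whole ``if and only if'' collapses to locating the unique $N$ at which the trace vanishes and verifying that the crossing there is transversal.

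First I would extract the trace condition. Setting $\operatorname{trace}(J_2'')=Na_0-(1+\bar{v})(2a_1+a_2+a_3\bar{v})=0$ and eliminating $a_0$ by means of the interior-equilibrium relation obtained from the first equation of \eqref{normal form_1} at $u=1$, namely $a_0=(a_1+a_2+a_3\bar{v})(1+\bar{v})$, the common positive factor $(1+\bar{v})$ cancels and the condition reduces to $N(a_1+a_2+a_3\bar{v})=2a_1+a_2+a_3\bar{v}$, i.e.\ $N=1+\frac{a_1}{a_1+a_2+a_3\bar{v}}$. Because $\det(J_2'')>0$ always holds, purely imaginary eigenvalues can occur \emph{only} when the trace vanishes, and this value of $N$ is the unique such point; this simultaneously establishes both directions of the biconditional at the level of the purely-imaginary-eigenvalue requirement.

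The main obstacle is the transversality condition, which is delicate here because $N$ does not enter the trace as a simple independent linear coefficient: the bifurcation parameter appears in $a_0=r_0 m^{N-2}q^{-(N-1)}$ and, through the equilibrium quadratic $a_3\bar{v}^2+(a_1+a_2+a_3)\bar{v}+(a_1+a_2)-a_0=0$, also shifts $\bar{v}$. Writing the eigenvalues as $\omega_H=\tfrac12\bigl(\tau\pm\sqrt{\tau^2-4\Delta}\bigr)$ with $\tau=\operatorname{trace}(J_2'')$ and $\Delta=\det(J_2'')$, near the critical value one has $\tau^2<4\Delta$, so $\operatorname{Re}(\omega_H)=\tau/2$ and transversality is exactly $\frac{d\tau}{dN}\big|_{N^*}\neq 0$. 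I would compute this derivative by implicit differentiation: $\frac{da_0}{dN}=a_0\ln(m/q)$, while differentiating the equilibrium relation gives $\frac{d\bar{v}}{dN}=\frac{a_0\ln(m/q)}{a_1+a_2+a_3+2a_3\bar{v}}$, and substituting both into $\frac{d\tau}{dN}=a_0+N\frac{da_0}{dN}-(2a_1+a_2+a_3+2a_3\bar{v})\frac{d\bar{v}}{dN}$ yields an explicit expression. The remaining task is to confirm this quantity is nonzero at $N^*$; it is nonzero generically (vanishing only on a measure-zero parameter set), and in particular it stays nonzero whenever $m\neq q$, which makes the $\ln(m/q)$ contributions nondegenerate, thereby closing the transversality requirement.

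Finally, to certify a genuine rather than degenerate Hopf bifurcation, I would check the first-order nondegeneracy by evaluating the first Lyapunov coefficient at $(1,\bar{v})$, either via the standard normal-form reduction after translating the equilibrium to the origin and putting $J_2''|_{N^*}$ into real Jordan form, or via the planar curvature-coefficient formula, and confirming it is nonzero (this also determines sub- versus supercriticality and hence the direction of the emerging limit cycle). With the trace condition fixing the exact value $N=1+\frac{a_1}{a_1+a_2+a_3\bar{v}}$, the always-positive determinant guaranteeing purely imaginary eigenvalues there, and the transversality and nondegeneracy checks in place, the Hopf bifurcation occurs precisely at that value of $N$, which is exactly the claimed statement.
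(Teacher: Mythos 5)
Your route to the critical value is exactly the paper's own: impose $\operatorname{trace}(J_2'')=0$, eliminate $a_0$ through the equilibrium relation $a_0=(a_1+a_2+a_3\bar{v})(1+\bar{v})$, and invoke $\det(J_2'')>0$ to argue that purely imaginary eigenvalues occur precisely when the trace vanishes; the paper's proof is a terser version of this same computation. Where you genuinely depart from the paper is the transversality check. The paper simply asserts $\bigl(\frac{d\,\mathrm{Re}[\omega_H]}{dN}\bigr)_{N=N_H}=a_0\neq 0$, i.e.\ it differentiates the trace treating $a_0$ and $\bar{v}$ as constants in $N$, whereas you track the $N$-dependence of $a_0=r_0m^{N-2}q^{1-N}$ and of $\bar{v}$ through the equilibrium quadratic. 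Your reading is the honest one for the original parametrization (and for the paper's numerics, which vary $N$ in the original model), and your implicit-differentiation formulas are correct. Note, however, why the paper's shortcut is not baseless: since the interior equilibrium sits at $u=1$, the factor $u^N\equiv 1$ drops out of the equilibrium equation, so $\bar{v}$ depends on $N$ only through $a_0$; if one regards \eqref{normal form_1} abstractly, with $(a_0,a_1,a_2,a_3)$ as fixed independent parameters, then $d\tau/dN=a_0\neq 0$ holds exactly, which is all the paper claims.

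The concrete flaw in your proposal is the closing claim that the derivative ``stays nonzero whenever $m\neq q$.'' Substituting your own formulas for $da_0/dN$ and $d\bar{v}/dN$ into $d\tau/dN$ and evaluating at $N_H=1+\frac{a_1}{a_1+a_2+a_3\bar{v}}$ gives
\begin{equation*}
\left.\frac{d\tau}{dN}\right|_{N_H}
= a_0\left[\,1+\ln\!\left(\frac{m}{q}\right)\cdot
\frac{a_1a_3(1+\bar{v})}{(a_1+a_2+a_3\bar{v})\,(a_1+a_2+a_3+2a_3\bar{v})}\,\right],
\end{equation*}
and the factor multiplying the logarithm is strictly positive. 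So the situation is the opposite of what you state: when $m=q$ the logarithmic terms vanish and the derivative equals $a_0>0$, so transversality is automatic; more generally it is guaranteed nonzero whenever $m\geq q$; the only possible degeneracy is for $m<q$, where the bracket can vanish under an exceptional parameter balance. Your argument therefore closes transversality for $m\geq q$ and generically otherwise, but the reason you give is backwards, and for $m<q$ the ``if'' direction of the theorem genuinely requires excluding that exceptional set (or adopting the paper's fixed-$a_0$ interpretation). Your final Lyapunov-coefficient step is fine but is not part of this theorem in the paper; it corresponds to the separate ``Direction of Hopf Bifurcation'' section.
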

\begin{proof}
From the Hopf bifurcation theorem we have that  Hopf bifurcation will occour for a parameter value $N=N_H$ if \\
\begin{align}
    Rel[\omega_H](N_H)=&0\nonumber \textit{ \& }
    trace[J''_2](N_H)= N_Ha_0-(1+\Bar{v})(2a_1+a_2+a_3\Bar{v})=0\nonumber\\
   N_H=&1+\frac{a_1}{a_1+a_2+a_3\Bar{v}}\label{condition of hopf bifurcation}
\end{align}
  When N=$N_H$ it is evident that both the eigenvalues are purely imaginary and also$Img[\omega_H](N_H)= \lambda_H(say)\\
  \sqrt{(1+2\Bar{v})(a_0-a_1)+a_1\Bar{v}^2}>0$. Now we are going to verify the transversality condition, which shows that the eigenvalues of our system will pass through the imaginary axis with a non-zero velocity. Hence we take a point N from any neighborhood of $N_H$,
$\bigg(\frac{dRel[\omega_H]}{dN}\bigg)_{N=N_H}=a_0\neq0.$
\end{proof}
This confirms the existence of Hopf bifurcation.
\subsection{Direction of Hopf Bifurcation}
A Taylor series expansion up to the third-order of the system \ref{normal form_1} about the equilibrium point $z=(z_1,z_2)=(0,0)$ yields the following form: 
\begin{equation*}
    \dot{Z} = J_{E^*} Z + H(Z), \text{ where } Z = 
\begin{pmatrix}
z_1 \\
z_2
\end{pmatrix},
\quad
H(Z) = 
\begin{pmatrix}
H_1 \\
H_2
\end{pmatrix}.
\end{equation*}
The nonlinear terms are given by:
\[
H_1 = c_{20} z_1^2 + c_{11} z_1 z_2 + c_{02} z_2^2 + c_{30} z_1^3 + c_{21} z_1^2 z_2 + c_{12} z_1 z_2^2 + c_{03} z_2^3,
\]
\[
H_2 = d_{20} z_1^2 + d_{11} z_1 z_2 + d_{02} z_2^2 + d_{30} z_1^3 + d_{21} z_1^2 z_2 + d_{12} z_1 z_2^2 + d_{03} z_2^3.
\]
Here,
\begin{align*}
c_{20} &= N(N-1)a_0 - 2a_1(1+v), &d_{20} = 0; \\
c_{11} &= -a_3 (1+v) - (2a_1 + a_2 + a_3 v), &d_{11} = 1; \\
c_{02} &= -2a_3, &d_{02} = -1; \\
c_{30} &= N(N-1)(N-2)a_0, &d_{30} = 0; \\
c_{21} &= -2a_1, &d_{21} = 0;\\
c_{12} &= -2a_3, &d_{12} = 1;\\
c_{03} &= 0, &d_{03} = 0.
\end{align*}
Using the transformation \( Z = SY \), where 
$
S = \begin{pmatrix} c_{01} & 0 \\ -c_{10} & -\lambda_H \end{pmatrix},
$
The system becomes:
\begin{equation*}
    \dot{Y} = S^{-1} J_{E^*} S Y + S^{-1} H(SY).
\end{equation*}
Which can be written as:
\[
\begin{pmatrix} \dot{y}_1 \\ \dot{y}_2 \end{pmatrix} = \begin{pmatrix} 0 & -\lambda_H \\ \lambda_H & 0 \end{pmatrix} \begin{pmatrix} y_1 \\ y_2 \end{pmatrix} + \begin{pmatrix} Q_1(y_1, y_2; N = N_H) \\ Q_2(y_1, y_2; N = N_H) \end{pmatrix},
\]
where \( Q_1 \) and \( Q_2 \) are nonlinear functions of \( y_1 \) and \( y_2 \), given by:
\[
Q_1(y_1, y_2; N = N_H) = \frac{1}{c_{01}} H_1,
\hspace{1cm} \& \hspace{1cm}
Q_2(y_1, y_2; N = N_H) = \frac{-1}{\lambda_H c_{01}} \left( c_{10} H_1 + c_{01} H_2 \right).
\]
To calculate the first Lyapunov coefficient \( l_1 \), we use the following expression:
\[
l_1 = \frac{1}{16} \left( Q_1^{111} + Q_1^{122} + Q_2^{112} + Q_2^{222} \right) + \frac{1}{16\lambda_H} \left( Q_1^{12}(Q_1^{11} + Q_1^{22}) - Q_2^{12}(Q_2^{11} + Q_2^{22}) - Q_1^{11} Q_2^{11} + Q_1^{22} Q_2^{22} \right),
\]
where the second and third derivatives of \( Q_1 \) and \( Q_2 \) are evaluated at \( (y_1, y_2; N) = (0, 0; N_H) \):
$
Q_k^{ij} = \frac{\partial^2 Q_k}{\partial y_i \partial y_j} \bigg|_{(y_1, y_2; N) = (0, 0; N_H)}
$ and
$
Q_k^{ijl} = \frac{\partial^3 Q_k}{\partial y_i \partial y_j \partial y_l} \bigg|_{(y_1, y_2; N) = (0, 0; N_H)},
$
where \( i, j, k, l \in \{1, 2\} \). By substituting the expressions of the partial derivatives into the expression for \( l_1 \), we obtain the following result (for detailed calculations, see Supplementary \eqref{lyapunov}).
\begin{align*}
l_1 &= \frac{1}{16} \left( Q_1^{111} + Q_1^{122} + Q_2^{112} + Q_2^{222} \right) \\
&\quad + \frac{1}{16 \lambda_H} \left( Q_1^{12}(Q_1^{11} + Q_1^{22}) - Q_2^{12}(Q_2^{11} + Q_2^{22}) - Q_1^{11} Q_2^{11} + Q_1^{22} Q_2^{22} \right) \\
&= \frac{1}{16} \left( 0 + 0 + 0 + 0 \right) \\
&\quad + \frac{1}{16 \lambda_H} \left[ \lambda_H \left( 2 c_{02} c_{10} - c_{11} c_{01} \right) \left( c_{20} c_{01}^2 - c_{11} c_{01} c_{10} + c_{02} c_{10}^2 + \lambda_H^2 c_{02} \right) \right. \\
&\quad - \left. \lambda_H \left( 2 d_{02} c_{10} - d_{11} c_{01} \right) \left( d_{20} c_{01}^2 - d_{11} c_{01} c_{10} + d_{02} c_{10}^2 + \lambda_H^2 d_{02} \right) \right. \\
&\quad - \left. \left( c_{20} c_{01}^2 - c_{11} c_{01} c_{10} + c_{02} c_{10}^2 \right) \left( d_{20} c_{01}^2 - d_{11} c_{01} c_{10} + d_{02} c_{10}^2 \right) \right. \\
&\quad + \left. \left( \lambda_H^2 c_{02} \right) \left( \lambda_H^2 d_{02} \right) \right]\\
&= \frac{1}{16 \lambda_H} \left[ \left[N(N-1)a_0 - 2a_1(1+v)\right] (-a_3 - a_1 - a_2 - 2a_3 v)^2 \lambda_H^2 - 2a_3 \lambda_H^4 - \lambda_H^3 \left[a_3 + a_1 + a_2 + 2a_3 v\right] + 2a_3 \lambda_H^4 \right] \\
&= \frac{1}{16 \lambda_H} \left[ \left[N(N-1)a_0 - 2a_1(1+v)\right] (-a_3 - a_1 - a_2 - 2a_3 v)^2 \lambda_H^2 - \lambda_H^3 \left[a_3 + a_1 + a_2 + 2a_3 v\right] \right. \\
&\quad \left. + \left(2a_3 \lambda_H^4 - 2a_3 \lambda_H^4\right) \right] \\
&= \frac{1}{16 \lambda_H} \left[ \left[N(N-1)a_0 - 2a_1(1+v)\right] (-a_3 - a_1 - a_2 - 2a_3 v)^2 \lambda_H^2 - \lambda_H^3 \left[a_3 + a_1 + a_2 + 2a_3 v\right] \right]
\end{align*}
\subsubsection*{Observation:}
From the above expression, it is evident that when \(N = N_H, v=v^*\), both \(N a_0 - (2a_1 + a_2 + a_3 v^*)(1 + v^*) = 0\) and \(a_0 - (a_1 + a_2 + a_3 v^*)(1 + v^*) = 0\) hold simultaneously. Consequently, we obtain \(\left[N(N-1) a_0 - 2a_1(1+v)\right] < 0\), which implies \(l_1 < 0\). This indicates that after undergoing a Hopf bifurcation, the system will generate a stable limit cycle, persisting until it experiences the next bifurcation.
\subsection{Heteroclinic Bifurcation}\label{heteroclinic_curve_sum}
To prove the heteroclinic bifurcation, our main objective is to demonstrate the existence of heteroclinic curves. We have already shown that after a supercritical Hopf bifurcation, the system's oscillation amplitude increases without changing direction. Therefore, if a heteroclinic orbit exists, the expanding limit cycle will inevitably collide with this orbit, leading to a heteroclinic bifurcation.
\subsection*{Existence of a Heteroclinic orbit}\label{heteroclinic_curve_sum}
\begin{theorem}
If \(\alpha < \frac{m}{q} < \beta\), a heteroclinic orbit will exist in the system, acting as a separatrix between the stable manifold of the trivial equilibrium and that of the interior equilibrium. This orbit consists of two distinct trajectories connecting the saddle points \((\alpha, 0)\) and \((\beta, 0)\), and it may also arise when the stable limit cycle functions as an interior attractor.
\end{theorem}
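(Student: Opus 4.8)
The plan is to split the asserted heteroclinic cycle into its two constituent trajectories and treat them separately, since only one of them is elementary. First I would record the structure supplied by the earlier results: the line $\{v=0\}$ is invariant because $\dot v=v(qu-m)$ vanishes there, and under the hypothesis $\alpha<\frac{m}{q}<\beta$ the axial stability theorems make $(\alpha,0)$ a saddle with $\lambda_1>0$ along the $u$-axis and $\lambda_2=q\alpha-m<0$ transverse to it, while $(\beta,0)$ is a saddle with $\lambda_1'<0$ along the axis and $\lambda_2'=q\beta-m>0$ transverse. Because the axial Jacobians are upper triangular, the unstable manifold of $(\alpha,0)$ and the stable manifold of $(\beta,0)$ lie along the invariant axis, whereas the stable manifold of $(\alpha,0)$ and the unstable manifold of $(\beta,0)$ reach into the open quadrant $v>0$; these are the four manifolds whose pairing I must arrange.

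For the first trajectory I would analyse the scalar flow on the invariant axis, where the system reduces to $\dot u=u\,f(u)$ with $f(u)=r_0u^{N-1}-au-d$. From the sign pattern established in the existence analysis --- $f<0$ on $(0,\alpha)$, $f>0$ on $(\alpha,\beta)$, and $f<0$ on $(\beta,\infty)$ --- the open segment $(\alpha,\beta)\times\{0\}$ is a single orbit flowing monotonically from $(\alpha,0)$ to $(\beta,0)$. This segment is precisely the unstable manifold of $(\alpha,0)$ meeting the stable manifold of $(\beta,0)$; it furnishes the first heteroclinic connection, lying entirely on the boundary and existing for every parameter choice with $\alpha<\frac{m}{q}<\beta$.

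For the interior arc I would track the branch of the unstable manifold $W^u(\beta,0)$ that enters $v>0$. Using the boundedness of the system for $N<2$ (established elsewhere in the paper) together with the invariance of both axes, this branch stays in a compact subset of the open first quadrant, so by Poincar\'e--Bendixson its $\omega$-limit set is an equilibrium, a periodic orbit, or a graphic of saddle connections. The content of the theorem is that, in the bistable configuration forced by the hypothesis (a stable trivial node, the two boundary saddles, and an interior focus that is either attracting or surrounded by a limit cycle), this branch accumulates on $(\alpha,0)$ along its interior stable manifold, so that $W^u(\beta,0)=W^s(\alpha,0)$. Concatenating this interior arc with the boundary segment closes a cycle joining the two distinct saddles --- heteroclinic, not homoclinic, precisely because its two endpoints are different equilibria. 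Since the loop encircles the interior equilibrium, in the post-Hopf regime the stable limit cycle ($l_1<0$, established above) sits inside and expands toward it, which is the sense of ``may also arise when the stable limit cycle functions as an interior attractor''; and since trajectories just outside the loop are swept to $(0,0)$ while those just inside approach the interior attractor, the loop is the separatrix asserted in the statement.

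The hard part is justifying the coincidence $W^u(\beta,0)=W^s(\alpha,0)$. A saddle-to-saddle connection in the plane is non-generic, so it is the heteroclinic bifurcation event itself rather than an open-parameter phenomenon, and the standard trapping-region certificate fails here because the two one-dimensional manifolds generically split --- one veering into the interior basin, the other toward extinction --- with the splitting numerically volatile near the saddles. This is exactly where I would invoke the paper's refined scheme: instead of estimating a trapping region, track and match the incoming and outgoing orbit directions in small neighborhoods of the joining points $(\alpha,0)$ and $(\beta,0)$, which both detects the connection and distinguishes the heteroclinic curve from a homoclinic one (a connection returning to a single saddle). Carrying out this matching, while simultaneously ruling out that $W^u(\beta,0)$ instead accumulates on an intermediate limit cycle, is the crux on which the whole argument rests.
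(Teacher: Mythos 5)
Your treatment of the axis connection is correct and in fact cleaner than the paper's: on the invariant line $v=0$ the flow reduces to $\dot u = u f(u)$ with $f>0$ on $(\alpha,\beta)$, so the open segment between the two saddles is a single monotone orbit from $(\alpha,0)$ to $(\beta,0)$; the paper reaches the same conclusion only after a series computation. The genuine gap is that your proposal never proves the statement's central claim, the interior connection $W^u(\beta,0)=W^s(\alpha,0)$. You reduce it via Poincar\'e--Bendixson to excluding the other admissible $\omega$-limit sets (interior equilibrium, limit cycle, origin), and then explicitly defer that exclusion to ``the paper's refined scheme'' --- a method you have not seen and do not reconstruct. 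A proof cannot outsource its crux: as written, you have only established that the interior branch of $W^u(\beta,0)$ limits to an equilibrium, a periodic orbit, or a graphic, which is strictly weaker than the theorem. Your genericity observation (a planar saddle-to-saddle connection off the invariant axis is codimension one) is correct and even exposes a real tension with the theorem's phrasing ``if $\alpha<\frac{m}{q}<\beta$ the orbit will exist,'' since the interior connection can only occur at isolated parameter values such as the numerically reported $N_{HB}$; but you leave that tension unresolved rather than turning it into either a proof or a corrected statement.

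For comparison, the paper's actual proof is constructive and entirely local, following the series method of Bao: it drops the fear term, rescales time by $x\,dt=d\tau'$, translates each saddle to the origin, substitutes $t=\mp\frac{1}{T}\ln\tau$, and seeks the invariant branches as power series $u(\tau)=\sum_i a_i\tau^i$, $v(\tau)=\sum_i b_i\tau^i$. Comparing coefficients forces $T$ to be an eigenvalue-type quantity (for instance $T_1=\frac{k-\alpha}{\alpha}$ at $(\alpha,0)$, with $k$ the rescaled predator-nullcline abscissa), yields recursions whose solutions are bounded geometrically, and summation shows each branch decays to its saddle as $t\to\pm\infty$. This produces, at $(\alpha,0)$, an interior branch attracted forward in time and an axial branch emitted backward in time, and symmetrically at $(\beta,0)$; the paper then reads the two trajectories of the cycle off these four local branches. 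Whether the paper's own gluing of local branches into a global connection is fully rigorous is itself debatable, but it does supply the explicit local analysis and orbit-direction matching that your outline identifies as necessary and then omits; your argument stops exactly at the step that makes the theorem nontrivial.
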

 \begin{figure}[h!]
    \begin{center}
   \includegraphics[height=2.5in,width=5in]{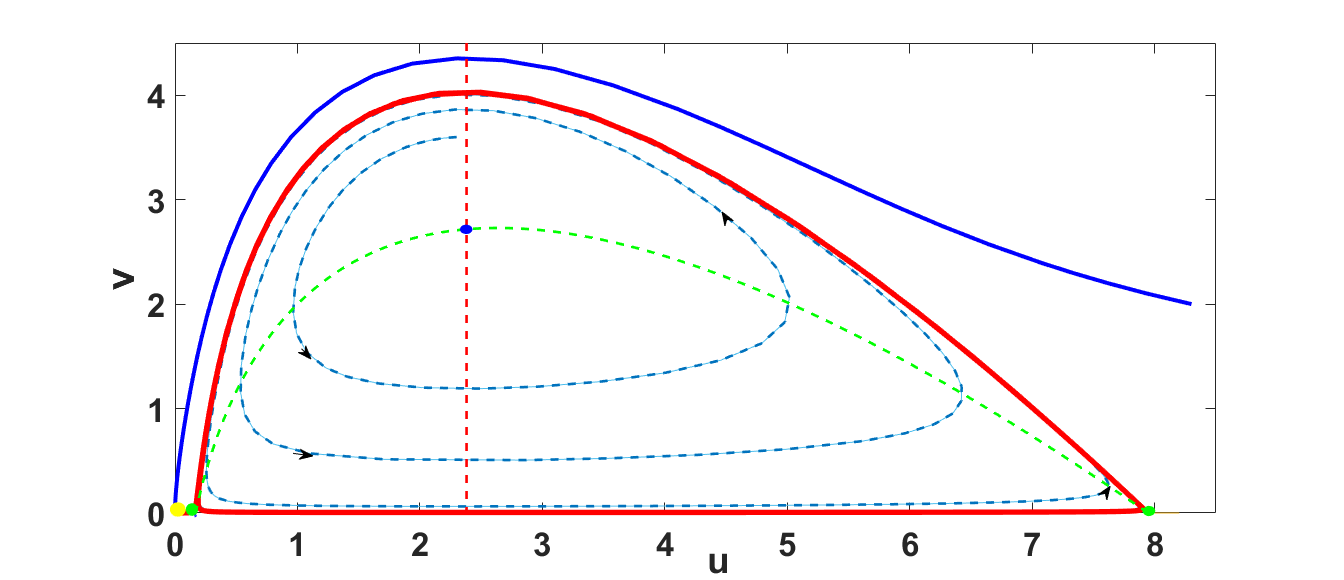}
    \end{center}
    \caption{The red closed orbit connecting two saddle axial equilibria (small green bid) is a heteroclinic curve that acts like a separatrix. Flow lines from the interior approach the orbit, while blue and violet flows originating outside converge to the origin (small yellow bid). The green dotted curve and the red dotted line represent prey and predator nullclines respectively, intersecting at the unique interior equilibrium point (blue bid), which is unstable in this scenario. Parameter values used for this analysis are as follows:\(r = 0.6\), \(N = 1.52712655\), \(K = 0\), \(a = 0.2\), \(p = 0.1\), \(d = 0.2\), \(c = 0.42\), and \(m = 0.1\) }
    \label{heteroclinic_curve}
    \end{figure}
\begin{proof}
Proving the existence of a heteroclinic orbit typically does not follow a standard approach and presents certain difficulties, necessitating the development of a suitable method for confirmation. In this section, we establish the existence of the heteroclinic orbit using the method proposed by \citep{bao2011new}. As this method is not universally applicable, we introduce specific assumptions and transformations to simplify the system into a consistent form. In our case, we focus on demonstrating the existence of the orbit rather than deriving the exact solution or trajectory. To simplify certain computational challenges, we examine the system without considering the fear effect and assign specific parameter values as 1 where necessary.

\[
\frac{dx}{dt} = x^N - ax^2 - dx - xy,
\]
\[
\frac{dy}{dt} = xy - ky.
\]
Now, we want to take a transformation, say \(xdt = d\tau'\), then:
\[
\frac{dx}{d\tau'} = x^{N-1} - ax - d - y,
\]
\[
\frac{dy}{d\tau'} = \frac{1}{x} (xy - ky).
\]

In our study, we seek a heteroclinic curve joining the saddle points \((\alpha, 0)\) and \((\beta, 0)\) of the system. These points satisfy \(\alpha^{N-2} > \frac{a}{(N-1)}\) and \(\frac{a}{(N-1)} > \beta^{N-2}\). Consequently, any \(k\) on this curve must satisfy \(\alpha < k < \beta\).\\

\subsubsection{ Study around the fixed point \textbf{$(\alpha,0)$}}\label{heteroclinic_alpha}
To analyze the dynamical behavior of the trajectories of the heteroclinic orbit around the saddle point \((\alpha, 0)\), we apply the transformation \(x = \alpha + u\) and \(y = v\), where \(\alpha\) satisfies \(\alpha^N - a\alpha^2 - d\alpha = 0\). We assume \(u\) to be very close to zero for this analysis. Thus, the reduced system is given by:

\[
\frac{du}{d\tau'} = \alpha^{N-1} \left( 1 + (N-1)\frac{u}{\alpha} + \frac{(N-1)(N-2)}{2!} \frac{u^2}{\alpha^2} + \dots \right) - a\alpha - au - d - v,
\]
\[
\frac{dv}{d\tau'} = \frac{1}{\alpha} \left( 1 - \frac{u}{\alpha} + \dots \right) (u + \alpha - k)v.
\]
Simplifying the system by excluding higher-order terms of degree two and above, the reduced system in the more familiar coordinate system is given by:
\begin{align}\label{reduced_heteroclinic}
\frac{du}{dt} &= \left( (N-1)\alpha^{N-2} - a \right) u - v,\nonumber\\
\frac{dv}{dt} &= \frac{k}{\alpha} uv - \frac{k - \alpha}{\alpha}v. 
\end{align}
It is important to note that after this transformation, the limiting condition \(\tau' \Rightarrow +\infty\) remains equivalent to \(t \Rightarrow +\infty\).
To analyze the dynamics of heteroclinic orbits near $(\alpha,0)$, it suffices to examine the trajectory $(u(t),v(t))$ within the first quadrant, ensuring biological plausibility.
\item{Case: for t > 0}\label{Case: for $t>0$}
\begin{itemize}
As in all the discussion, based on the assumption that \( t > 0 \), hence after taking the transformation
\begin{align}\label{t_to_tau}
    t = -\frac{1}{T_1} \ln(\tau)
\end{align}
the possible series expansion of $u(t)$ and $v(t)$ become,
\begin{align}\label{sum_gen_heteroclinic}
    0\leq u(\tau)= 0+\sum_{i=1}^{\infty}a_i\tau^i\\
    0\leq v(\tau)= 0+\sum_{i=1}^{\infty}b_i\tau^i.
\end{align}
Substituting the values into the system \eqref{reduced_heteroclinic},
\begin{align*}
  -T_1 a_1 \tau - 2T_1 a_2 \tau^2 - 3T_1 a_3 \tau^3 - \dots &= \Delta \sum_{i=1}^{\infty} a_i \tau^i - \sum_{i=1}^{\infty} b_i \tau^i,\\
-T_1 b_1 \tau - 2T_1 b_2 \tau^2 - 3T_1 b_3 \tau^3 - \dots &= \left(\frac{k}{\alpha} \sum_{i=1}^{\infty} a_i \tau^i \sum_{i=1}^{\infty} b_i \tau^i \right)- \frac{k-\alpha}{\alpha} \sum_{i=1}^{\infty} b_i \tau^i.  
\end{align*}

Here, $\Delta = (N-1)\alpha^{N-2} - a $. Comparing the coefficients of \(\tau^i\) in the above equations,
\[
-T_1 a_1 - \Delta a_1 + b_1 = 0 \hspace{0.5cm}\& \hspace{0.5cm}-T_1 b_1 + \frac{k-\alpha}{\alpha} b_1 = 0.
\]
This means that, 
\[
\begin{pmatrix}
-T_1 - \Delta & 1 \\
0 & -T_1 + \frac{k-\alpha}{\alpha}
\end{pmatrix}
\begin{pmatrix}
a_1 \\
b_1
\end{pmatrix} =
\begin{pmatrix}
0 \\
0
\end{pmatrix}.
\]
For the nontrivial solution, we can obtain,
\[
T_1 = -\Delta \quad \text{or} \quad T_1 = \frac{k}{\alpha} - 1 > 0 \quad \text{(as \(k > \alpha\))}.
\]
Also, \(\Delta = (N-1)\alpha^{N-2}\) - \( a  > 0 \) by (4.4) means that $T_1=-\Delta<0$. Hence, the possible $T_1=\frac{k-\alpha}{\alpha}$.\\
Now comparing the rest of the coefficients of $\tau^i$, we have
\begin{align*}
   a_1&=\theta_1, b_1=\left(T_1+\Delta\right) \theta_1 ,[\text{where } \theta_1 \text{is arbitrary real number}];\\
   a_2&=\frac{\frac{k}{\alpha}}{-T_1} \frac{T_1+\Delta}{(2 T_1+\Delta)} \theta_1^2  \Rightarrow \left|a_2\right| \leq \left|\frac{k}{\alpha T_1}\theta_1^2\right|, b_2 = \frac{k}{\alpha T_1}(T_1+\Delta)\theta_1^2 \Rightarrow \left|b_2\right|\leq \left|\frac{k}{\alpha T_1}(T_1+\Delta)\theta_1^2\right|;\\
a_3&=\frac{1}{2}\left(\frac{k}{\alpha T_1}\right)^2(T_1+\Delta) \dot{\theta}^3\left[1+\frac{T_1+\Delta}{2 T_1+\Delta}\right] \frac{1}{3 T_1+\Delta} \Rightarrow \left|a_3\right| \leq \left|\left(\frac{k}{\alpha T_1}\right)^2 \theta_1^3\right| ,\\
b_3 
&=\frac{1}{2}\left(\frac{k}{\alpha T_1}\right)^2(T_1+\Delta) \theta_1^3\left[1+\frac{T_1+\Delta}{2 T_1+\Delta}\right] \Rightarrow \left|b_3\right| \leq \left|\left(\frac{k}{\alpha T_1}\right)^2(T_1+\Delta) \theta_1^3\right|;\\
a_4&=\frac{\frac{k}{\alpha}}{-6 T_1}\left(\frac{k}{\alpha T_1}\right)^2(T_1+\Delta) \theta_1^4\left[\left[1+\frac{T_1+\Delta}{2T_1+\Delta}\right]\left[1+\frac{T_1+\Delta}{3 T_1+\Delta}\right]+2\left[\frac{T_1+\Delta}{2 T_1+d}\right]\right]\\
& \hspace{6cm} \left[\frac{1}{4 T_1+\Delta}\right]\Rightarrow \left|a_4\right| \leq \left|\left(\frac{k}{\alpha T_1}\right)^3 \theta_1^4\right|,\\
b_4&=\frac{\frac{k}{\alpha}}{-6 T_1}\left(\frac{k}{\alpha T_1}\right)^2(T_1+\Delta) \theta_1^4\left[\left[1+\frac{T_1+\Delta}{2 T_1+\Delta}\right]\left[1+\frac{T_1+\Delta}{3 T_1+\Delta}\right]+2\left[\frac{T_1+\Delta}{2 T_1+d}\right]\right]\\
& \hspace{6cm}\Rightarrow \left|b_4\right| \leq \left|\left[\left(\frac{k}{\alpha T_1}\right)^3 (T_1+\Delta) \theta_1^4\right]\right|;\dots.
\end{align*}
Hence we can write the general terms as
\begin{align}\label{general_term_heteroclinic}
   \left|a_n\right| \leq \left|\left(\frac{k}{\alpha T_1}\right)^{n-1} \theta_1^n\right| \textit{ and }
   \left|b_n\right| \leq \left|\left[\left(\frac{k}{\alpha T_1}\right)^{n-1} (T_1+\Delta) \theta_1^n\right]\right|. 
\end{align}
Now if we put the results \ref{general_term_heteroclinic} in the series expansion \ref{sum_gen_heteroclinic}, we can have\\
\begin{align}\label{ineq_series_final_1}
    0 \leq u(\tau) \leq \sum_{i=1}^{\infty} a_{k}\tau^{k} \leq \sum_{i=1}^{\infty} |a_{i}|\tau^{i} = \sum_{n=1}^{\infty}  \left|\left(\frac{k}{\alpha T_1}\right)^{n-1} \theta_1^n\right|\tau^{n}
\end{align} 
\begin{align}\label{ineq_series_final_2}
    0 \leq v(\tau) \leq \sum_{i=1}^{\infty} b_{i}\tau^{i} \leq \sum_{i=1}^{\infty} |b_{i}|\tau^{i} = \sum_{n=1}^{\infty}  \left|\left(\frac{k}{\alpha T_1}\right)^{n-1} (T_1+\Delta)\theta_1^n\right|\tau^{n}
\end{align} 
Hence $R_1$ be the radius of convergence of the power series, then $R_1 = \lim_{n\to\infty} \frac{\left(k/\alpha T_1\right)^{1/n}}{\frac{k}{\alpha T_1}\theta_1}= \frac{1}{\frac{k}{\alpha T_1}\theta_1} $\\
Now transforming result \ref{ineq_series_final_1} in the variable $t$ with $t\to +\infty$ then we have,
\begin{align*}
    0 \leq u(\tau) \leq  \sum_{n=1}^{\infty}  \left|\left(\frac{k}{\alpha T_1}\right)^{n-1} \theta_1^n\right|\tau^{n} = \frac{\theta_1 e^{-\frac{k-\alpha}{\alpha}t}}{1-\frac{k}{\alpha T_1}\theta_1 e^{-\frac{k-\alpha}{\alpha}t}}\to 0.
\end{align*}
Similarly, from \ref{ineq_series_final_2} we can have,
\begin{align*}
    0 \leq v(\tau) \leq  \sum_{n=1}^{\infty}  \left|\left(\frac{k}{\alpha T_1}\right)^{n-1} \theta_1^n\right|\tau^{n} = \frac{\theta_1 e^{-\frac{k-\alpha}{\alpha}t}}{1-\frac{k}{\alpha T_1}\theta_1 (T_1+\Delta)e^{-\frac{k-\alpha}{\alpha}t}}\to 0.
\end{align*}
So, we can conclude that,$(x(t),y(t)) \to (\alpha,0)$ as $t \to +\infty$.\\ 
\item{Case: for $t<0$: }\label{Case: for $t<0$}
Now, in our case, if we try to study the consequences for \(t < 0\), we have to choose
$t = \frac{1}{T_2} \ln \tau$
for the system
\begin{align}
    \dot{u}(\tau) &= \left[ (N-1)\alpha^{N-2} - a \right] u + \frac{(N-1)(N-2)}{2} u^2 \alpha^{N-3} - v\nonumber,\\
    \dot{v}(\tau) &= \frac{k}{\alpha} uv - \frac{k - \alpha}{\alpha} v.
\end{align}
Let us assume that $\Delta = \left[ (N-1)\alpha^{N-2} - a \right]$ and $\Phi = \frac{(N-1)(N-2)}{2}\alpha^{N-3}$ and proceeding as before, we conclude that we now have
\[
\begin{pmatrix}
T_2 - \Delta & 1 \\
0 & T_2 + \frac{k - \alpha}{\alpha}
\end{pmatrix}
\begin{pmatrix}
a_1' \\
b_1'
\end{pmatrix}
= 
\begin{pmatrix}
0 \\
0
\end{pmatrix}.
\]
Hence, \(T_2 = \Delta\) and comparing the coefficients of $\tau^1$ we can say $b_1= 0$ and we assume , $a_1= \xi$ where $\xi$ is an arbitrary constant. Now, comparing the coefficients of \(\tau^i\) on both sides, we conclude that:
\begin{align*}
 T_2 a_2 &= \frac{\Phi}{\Delta} \xi^2 \Rightarrow a_2 = \frac{\Phi}{\Delta} \xi^2,\\
2 T_2 a_2 &= \Delta a_2 - b_2 + \Phi a_2^2 \Rightarrow 2 T_2 b_2 = \frac{k}{\alpha} a_1 b_1 + \frac{k-\alpha}{\alpha} b_2 \Rightarrow \quad b_2 = 0;\\  
3 T_2 a_3 &= \Delta a_3 - b_3 + 2\Phi a_2 a_3 \Rightarrow a_3 = \left(\frac{\Phi}{\Delta} \right)^2 \xi^3 \textit{ and } b_3=0;\\
3 \Delta a_4 &= \Phi (a_1 a_3 + a_3 a_1 + a_2^2) \Rightarrow a_4 = \left( \frac{\Phi}{\Delta} \right)^3 \xi^4 \textit{ and } b_4=0;\\
\dots
\end{align*}
Hence following similar steps we can now conclude that,
\begin{align*}
 a_n &= \left( \frac{\Phi}{\Delta} \right)^{n-1} \xi^n \textit{ and } b_n=0 ;\textit{ for all } n \in \mathbb{N}.\\
    \text{ Thus we can say that, } u(\tau) &= \sum_{n=1}^\infty \left( \frac{\Phi}{\Delta} \right)^{n-1} \xi^n \tau^n \textit{ and } v(\tau) = 0.\\
    \text{ That means, } u(t) &= \frac{\xi e^{\Delta t}}{(1 - \frac{\Phi}{\Delta} \xi e^{\Delta t})} \to 0 \text{ as } t \to -\infty \textit{ and } v(t) = 0.
\end{align*}
Hence, around the saddle-fixed point \((\alpha, 0)\), our study confirms that one branch of the heteroclinic orbit, which lies over the \(x\)-axis, will converge to \((\alpha, 0)\) as \(t \to -\infty\). On the other hand, the other curvilinear trajectory, with \(x(t) > 0, y(t) > 0\) when \(t > 0\), will converge to \((\alpha, 0)\) as \(t \to +\infty\).
\end{itemize}
\subsubsection{ Study around the fixed point \textbf{$(\beta,0)$}}
To analyze the behavior of the trajectory near the saddle point \((\beta, 0)\), we perform a variable transformation \(x = u + \beta\) and employ a methodology analogous to that outlined in Section~\ref{heteroclinic_alpha}. This allows us to draw parallel conclusions regarding the dynamical structure around \((\beta, 0)\).
\begin{itemize}
\item{Case: for $t > 0$: }
But now for the study of \(t > 0\), we will choose
$t = - \frac{1}{T'_1} \ln \tau$ then putting this in the transformed system we will get $T_1' = (N-1) \beta^{N-2} - a ( by \ref{beta_position})\text{ as } \frac{k - \beta}{\beta} < 0$. Now, following the same steps as deducted in \ref{Case: for $t<0$}, we get:
\begin{align*}
   u(t) = \sum_{n=1}^\infty \left( \frac{\Phi_1}{\Delta_1} \right)^{n-1} \xi^n \tau^n = \frac{\xi e^{-\Delta_1 t}}{(1 - \frac{\Phi_1}{\Delta_1} \xi e^{-\Delta_1 t})} \to 0 \text{ as } t \to +\infty \textit{ and } v(t) = 0.\\
\text{Where, } \Phi_1= \frac{(N-1)(N-2)}{2}\beta^{N-3},\Delta_1= \left[ (N-1)\beta^{N-2} - a \right]. 
\end{align*}
Thus, as \(t \to +\infty\), one branch of the heteroclinic orbit, which lies along the \(x\)-axis, will asymptotically approach the saddle point \((\beta, 0)\), following the trajectory described earlier.
\item{Case: for $t<0$: }
By considering the case for \(t < 0\) and assuming the transformation \(t = \frac{1}{T'_2} \ln \tau\), we obtain \(T'_2 = \frac{\beta - k}{\beta}\). Thus, following the steps outlined in \ref{Case: for $t>0$}, we arrive at:

\[
u(t) = \frac{\theta_1 e^{\frac{\beta - k}{\beta}t}}{1-\frac{k}{\beta T'_2}\theta_1 e^{\frac{\beta - k}{\beta}t}}, \quad v(t) = \frac{\theta_1 e^{\frac{\beta - k}{\beta}t}}{1-\frac{k}{\beta T'_2}\theta_1 (T'_2+\Delta_1)e^{\frac{\beta - k}{\beta}t}}.
\]
The trajectory of the other heteroclinic orbit remains in the positive quadrant and converges to $(\beta,0)$ as $t \to -\infty$. 
\subsection*{Observation:}
Our analysis of the heteroclinic orbit not only tracks the potential paths of the heteroclinic curves near the saddle points but also identifies two distinct trajectories governed by the time variables. One trajectory lies along the x-axis, originating at \((\alpha, 0)\) and converging to \((\beta, 0)\), while the second trajectory, with nonzero ordinates, starts from the opposite saddle point and converges to \((\alpha, 0)\) as \(t \Rightarrow \infty\). These results precisely characterize the directionality and structure of the heteroclinic orbits, providing new insights into our system’s dynamics. Our method also indicates that the presence of two distinct positive eigenvalues suggests the possibility of two separate trajectories, confirming the existence of a heteroclinic orbit. 
\end{itemize}
\end{proof}
\section{Global Stability}
We've demonstrated that when \( N = N_H \), the interior equilibrium becomes unstable due to a Hopf bifurcation. At this point, all equilibrium points except the origin are unstable, yet a stable limit cycle emerges in the system. Upon surpassing a critical threshold value of \( N \), the stability of the limit cycle diminishes via a heteroclinic bifurcation, leaving the origin as the sole attractor in the system.\\

We will now establish the global stability of the origin mathematically using the Bendixon-Dulac stability criterion \citep{perko2013differential}. \\
For this proof, we assume that $N\in(1,2)$ and $x,y>0$
Let, us also assume that, \\
\begin{align*}
    f_1(x,y) & = \frac{r_0 x^N}{1+Ky} - ax^2 - pxy -dx , f_2(x,y)= cpxy - my 
    \textit{ $ \& $ } h(x,y) = x^{-N}.\\
    \textit{Now, } \Delta &= \frac{\delta}{\delta x}( h(x,y)f_1(x,y))+\frac{\delta}{\delta y}( h(x,y)f_2(x,y))\\
    &= \frac{\delta}{\delta x}(\frac{r_0}{1+Ky}-ax^{2-N}-px^{1-N}y-dx^{1-N})+\frac{\delta}{\delta y}(qx^{1-N} - mx^{-N})\\
    &\geq -a(2-N)x^{1-N}-d(1-N)x^{-N}+qx^{1-N}-mx^{-N}\\
    & = [q-a(2-N)]x^{1-N}+[d(N-1)-m]x^N.
\end{align*}
If both conditions \( N > 1 + \frac{m}{d} \) and \( N > 2 - \frac{q}{a} \) are satisfied simultaneously, or if exactly one of these two becomes equality, then \( \Delta > 0 \), indicating the absence of any closed orbits in the system. Consequently, when \( N > N'_0 \), all equilibrium points except the origin become unstable. Thus, when \( N > \max\{N'_0, 1 + \frac{m}{d}, 2 - \frac{q}{a}\} \), the origin becomes globally asymptotically stable.In the following section, we verify this condition numerically for a value of $N$ that exactly satisfies the aforementioned result.
\section{Numerical Validation}
In addition to rigorously validating the mathematical findings elucidated in the preceding section through numerical simulations. We initiate our investigation by considering a type-I functional response, subsequently extending our analysis to encompass the dynamics under the type-II functional response. We conduct our numerical simulations using MATLAB 2016a and MatCont 6 software. For $N=1$, the system assumes the exact form utilized in the work of \citep{wang2016modelling}, where the dynamics of the model have been extensively studied for both type-I and type-II functional responses. Therefore, in our case, we begin our exploration of bifurcation analysis for $1<N<2$.  \\
\begin{figure}[h!]
\centering
 \subfloat[\centering ]
{\includegraphics[height=8.0cm,width=\textwidth]{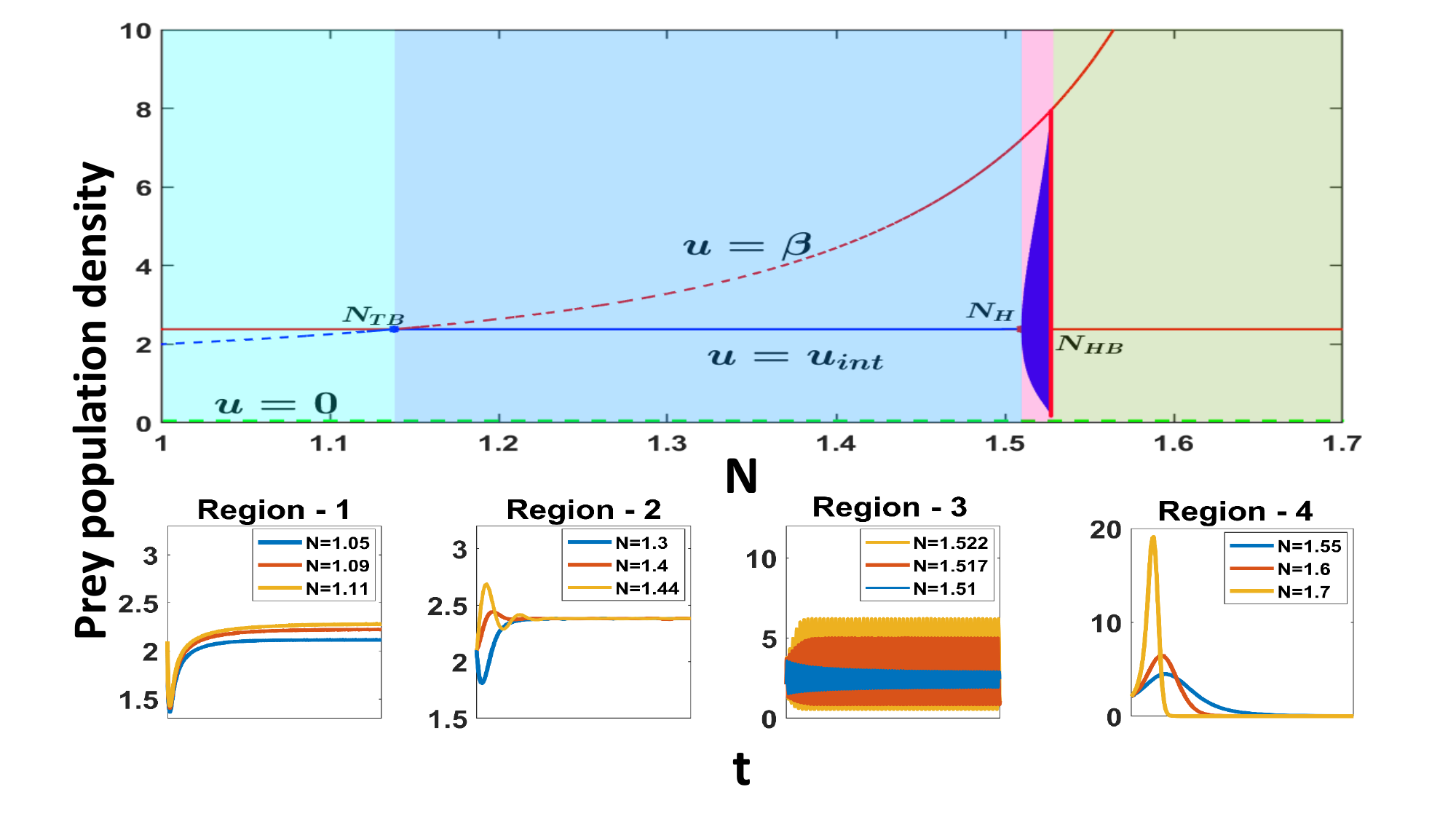}} 
\quad
   \subfloat[\centering] 
   {\includegraphics[height=6.8cm,width=\textwidth]{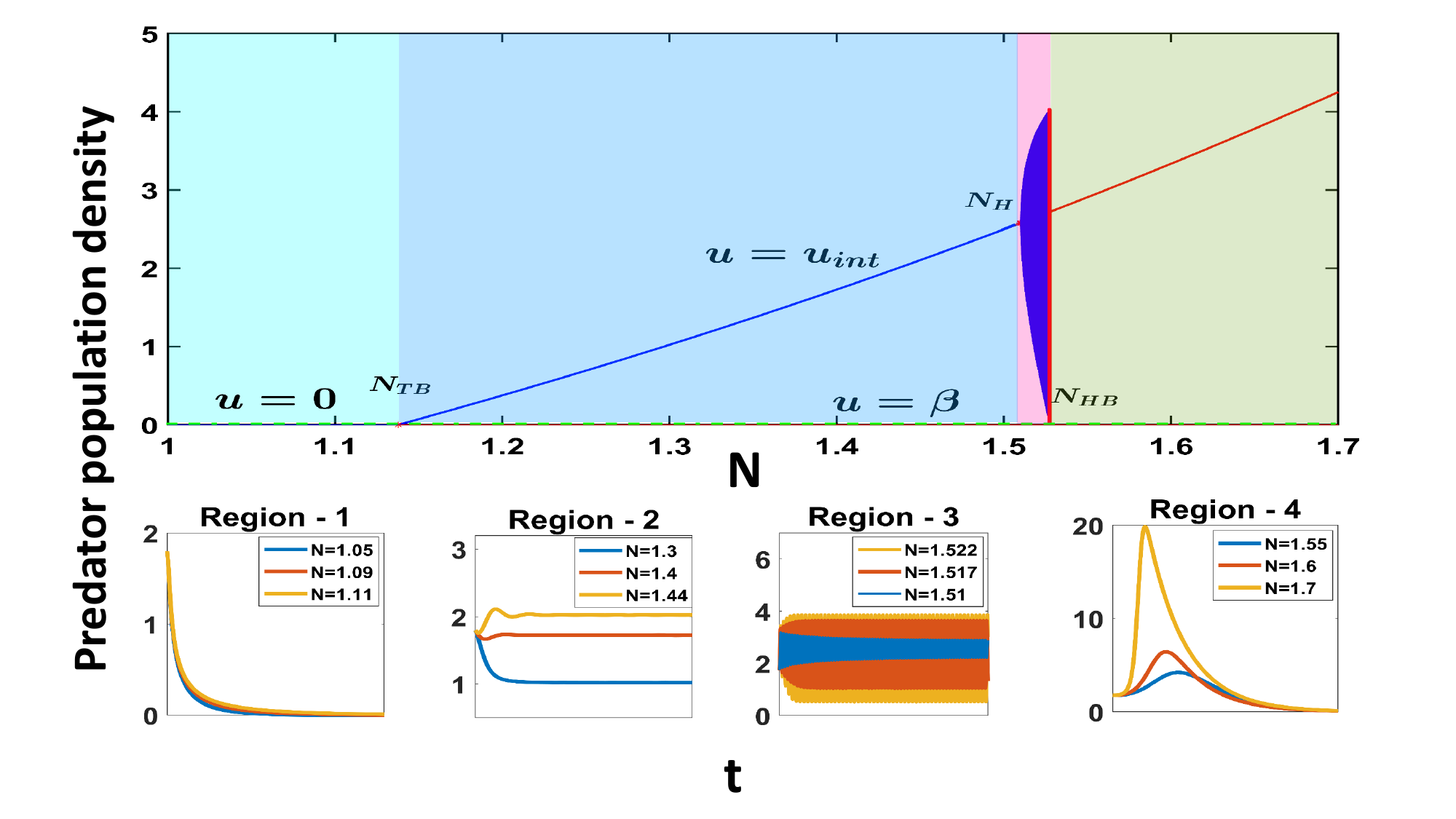}}
   \caption{All the one-parameter bifurcations of our system concerning the parameter $N$, plotted against prey density (see \((a)\)) and predator density (\(see (b)\)) respectively. Supporting region-wise change time series representation for different values of N is demonstrated in the text. In these figures, $u=\beta$ represents the bifurcation curve for right axial equilibrium, and $u=u_{int}$ is for the interior equilibrium. For the trivial equilibrium we use the $u=0$ line, its stable state is represented by a green dashed line. } 
    \label{bifurcation_timeseries}
     \label{timeseries_axial_v_1}
\end{figure}
 However, we employ a different parameter set only for better representation, although the results exhibit the same dynamical properties. The newly chosen parameter set is as follows:
$r = 0.6, \quad a = 0.2, \quad p = 0.1, \quad c = 0.42, \quad m = 0.1, \quad d = 0.2.$ We have taken $r$ instead of $r_0$ at the time of simulation.
\subsection{Study without fear effect}
In the absence of fear (i.e., \( K = 0 \)), as the parameter \( N \) initially increases, the predator nullcline does not intersect with the prey nullcline \eqref{bifurcation_timeseries}, leaving the interior equilibrium unstable. Meanwhile, the non-saddle predator-free equilibrium \( E_2 = (\beta, 0) \) remains stable up to \( N = 1.1654321 \) (\( N_{TB} \)). Additionally, we notice that the equilibrium density of the prey species increases accordingly with the increasing value of $N$ until it reaches $N_{TB}$. Due to the existence of the extinction equilibrium at a stable state, in this domain of the parameter "N," the system experiences a bi-stability between the predator-free equilibrium and the extinction equilibrium. We use light sky color to cover this region (see $\mathcal{R}_1$(Region-1) in Figure\eqref{bifurcation_timeseries}). At the critical point $N = N_{TB}$, the predator-free equilibrium $E_2 = (\beta,0)$ loses its stability through Transcritical bifurcation, after which the coexistence equilibrium becomes stable. This equilibrium is the only interior equilibrium of the system as this time the prey and predator nullcline intersect at only one point (see Figure\eqref{label fig 1}). Suppose we further increase the parameter value $N$ from $N_{TB}$ up to the value $N = N_H$. In that case, we observe that the interior equilibrium not only maintains its stability but also experiences a simultaneous increase in the predator equilibrium density, as we can see in $\mathcal{R}_2$(Region-2) of Figure\eqref{bifurcation_timeseries}(b). Interestingly this time increase in the value of $N$ can not have any impact on the prey density at this steady state, hence remains unchanged (see, $\mathcal{R}_2$ of Figure\eqref{bifurcation_timeseries}(a) and Figure\eqref{bifurcation_timeseries}(b)). A light blue shade is used to highlight the portion $\mathcal{R}_2$ in Figure\eqref{bifurcation_timeseries} representing the above dynamics. Since the trivial equilibrium still remains stable in this region, the bistability between the coexistence equilibrium and the extinction equilibrium also persists here. \\
So we can conclude that partial cooperation can promote the coexistence of the prey and predator species within the system for any chosen value $N$ in between a critical threshold interval $ (N_{TB}, N_H)$. By losing stability at $N = 1.5099327842$ ($N_H$), the coexistence equilibrium becomes an unstable node and stable oscillation takes place in the system through Hopf-bifurcation. Numerical evaluation of the first Lyapunov coefficient yields $L_1 = -0.003608117$, confirming the supercritical nature of the Hopf-bifurcation. Consequently, a stable limit cycle emerges around the interior equilibrium. Interestingly, as \(N\) increases, the stable limit cycle expands, as seen in the light pink-shaded region. For better understanding, we depict the time series representation of predator and prey, which clearly shows the increasing amplitude of oscillations with higher values of $N$ (in region \(\mathcal{R}_3\)(Region-3), Figure\eqref{bifurcation_timeseries}). However, we have demonstrated both numerically and mathematically the existence of the heteroclinic orbit \ref{he}. This closed orbit occurs by connecting the two axial saddle points \(E_1\) and \(E_2\), around the interior equilibrium point. These orbits serve as the separatrix curve, delineating the basins of attraction of the two attractors present in \(\mathcal{R}_3\), the stable limit cycle, and the trivial equilibrium that remains at the stable state. Hence, in this region, density-dependent system extinction remains a possible scenario.
\begin{figure}[h!]
\begin{center}
\includegraphics[height=4in,width=\textwidth]{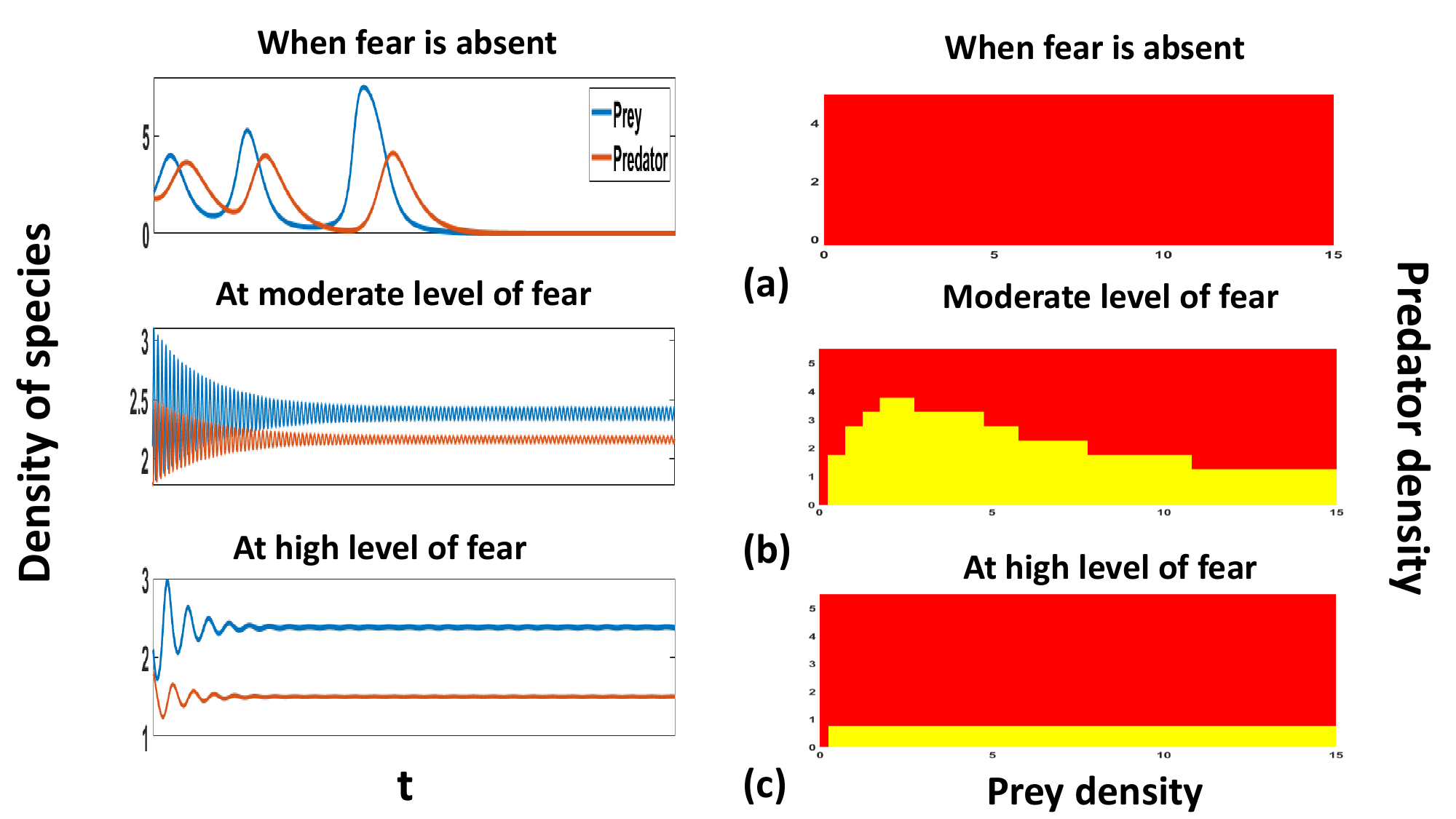}
    \end{center}
    \caption{In this figure, the left-hand side shows the time series trajectories of the prey and predator for different levels of fear, while the right-hand side displays their corresponding basins of attraction. The orange trajectory represents the population density of prey species, while the blue trajectory represents that of the predator. Here, we use the red color to denote the basin of attraction for the extinction equilibrium and the yellow for the corresponding stable state. For this study, we set \( N = 1.53 \) while keeping all other parameters unchanged. We examine the system under different levels of fear, considering three distinct cases: (a) \( K = 0 \), (b) \( K = 0.023 \), and (c) \( K = 0.1 \), with further details outlined in the text.}
    \label{study_impact_of_fear}
    \label{fear timeseries}
    \label{fear basin}
    \end{figure}
    The existence of the heteroclinic orbits also implies the amplitude of the oscillations cannot indefinitely increase with the increment of the parameter value $N$ which is also an infeasible case in a bounded system. However, findings reflect that the increment of the amplitude of the oscillatory trajectories continues until $N$ reaches the value $N_{HB} = 1.527223439$, where the stable limit cycle collides with the heteroclinic orbits and disappears through a heteroclinic bifurcation. We covered the rest of the region of the figures Figure\eqref{bifurcation_timeseries}(a) and Figure\eqref{bifurcation_timeseries}(b) by light olive. This region shows after crossing the value $N_{HB}$, regardless of the value $N$ takes, all trajectories converge to the extinction equilibrium. As observed, the green dashed line in each of the four sectors in Figure \eqref{bifurcation_timeseries} confirms the presence of a stable extinction equilibrium in this region. Additionally, the absence of other stable states in region \(\mathcal{R}_4\)(Region-4) suggests that the origin is now a globally stable equilibrium. Interestingly, the time series representation of this region reflects that as we increase the value of \( N \) in \(\mathcal{R}_4\), prey species face rapid extinction, which is subsequently followed by the extinction of predators.

\subsection{Study with the effect of fear:}
\begin{figure}[h!]
\centering
 \subfloat[\centering]
{\includegraphics[height=5.5cm,width=7.0cm]{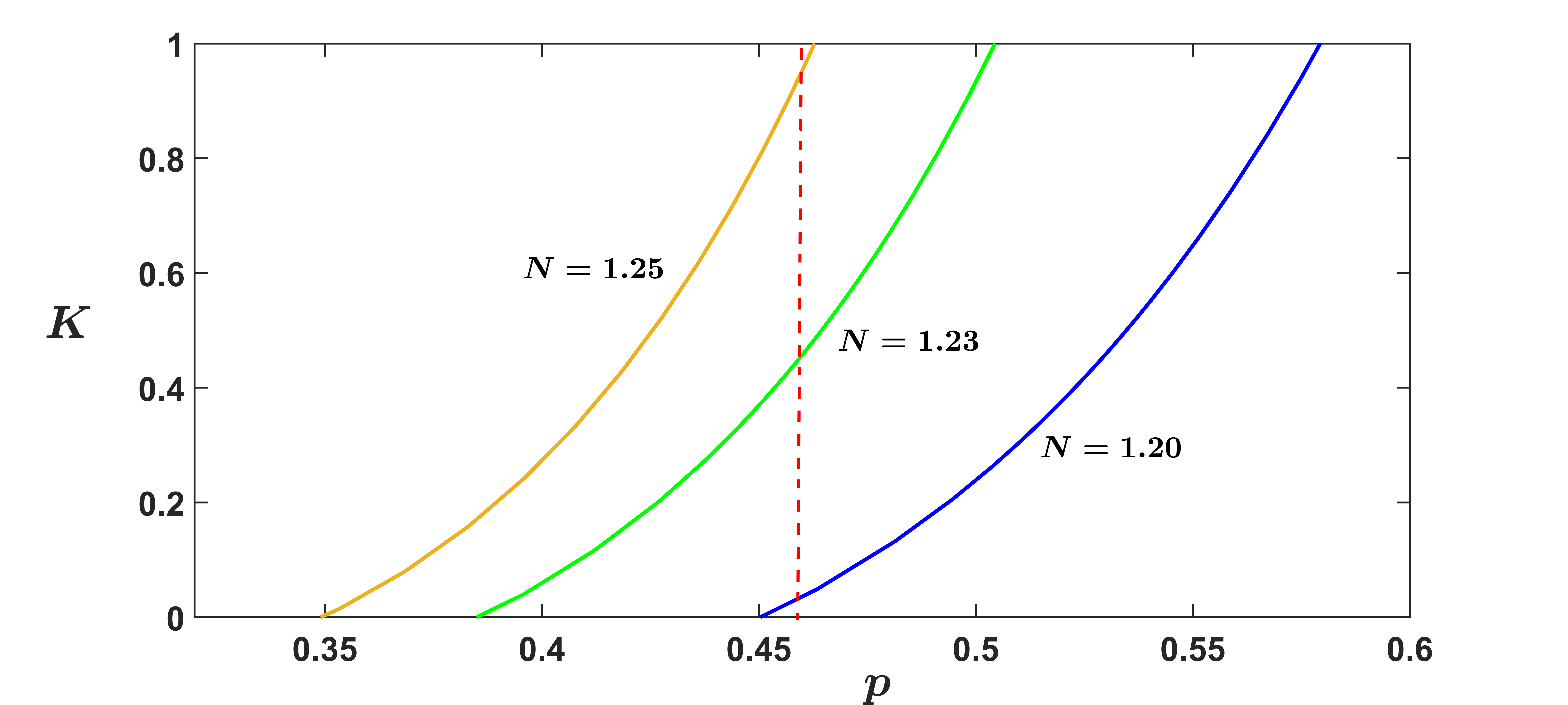}} 
\quad
   \subfloat[\centering]{\includegraphics[height=5.5cm,width=7cm]{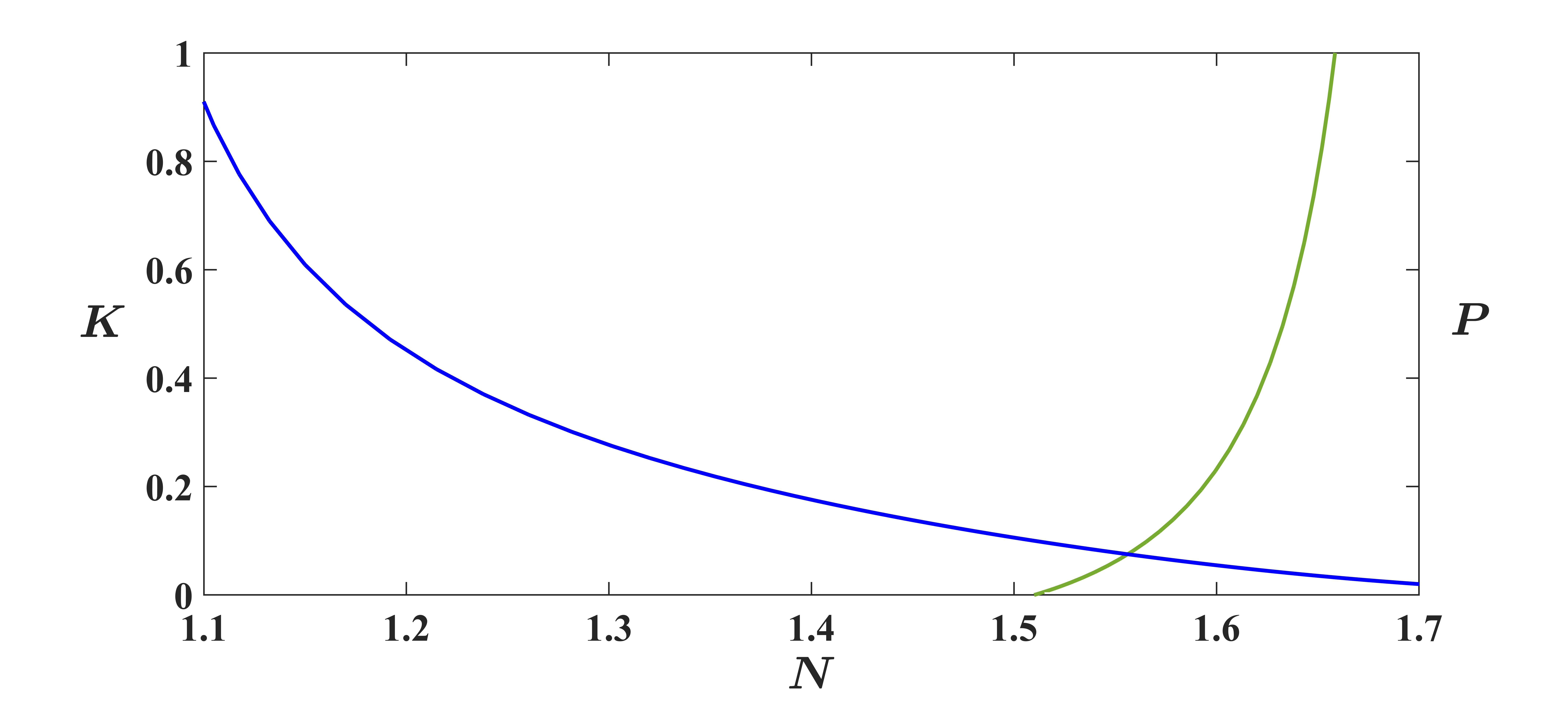}}
    \caption{A two-dimensional bifurcation of the Hopf bifurcation curve is projected across various parameter spaces. Figure \(\ref{label biparameter}(a)\) illustrates the biparameter effect in the \(K-p\) space at three different degrees of partial cooperation. The red dashed line indicates that at a fixed predation rate, progressively larger values of \(K\) are required to achieve a Hopf bifurcation as the degree of partial cooperation increases. In Figure \(\ref{label biparameter}(b)\), the parameter range for \(N\), \(K\), and \(p\) is shown. Here, the area below the blue line represents the parameter combination for stable states, and the above is unstable. For the green curve, the left area represents the parameter combination range for stable states and the right area for unstable states. The blue curve represents the Hopf bifurcation curve for the \(N-p\) plane, while the green curve represents the same for the \(N-K\) plane. The parameters \(K\) and \(p\) are on the same scale and limit, with all other parameters held constant.}%
    \label{fig:example}
     \label{label biparameter}
\end{figure}
Our next focus is to delve deeper into exploring the impact of fear on the system across different regions of \(N\) as represented in Figure \eqref{bifurcation_timeseries}. Now from the analysis of Figure\eqref{fear basin}, we can infer that fear exerts a stabilizing effect on our system. In Figure \eqref{study_impact_of_fear}(a), where \(K = 0\) and \(N = 1.53\), the origin acts as the sole attractor and exhibits global stability. In contrast, as shown in Figure \eqref{study_impact_of_fear}(b), when \(K\) increases to 0.023, the interior equilibrium remains unstable. However, a stable limit cycle forms around the interior equilibrium, signifying oscillatory dynamics within the system. However, after surpassing a critical fear threshold (\(K > 0.065\)), the interior equilibrium regains its stability, as shown in \eqref{study_impact_of_fear}(c). With further increases in \(K\), the stability of the interior equilibrium remains unchanged, although predator density at the stable equilibrium point continues to decline. Notably, the system cannot revert from stable coexistence to a predator-free stable state, regardless of the increasing \(K\), as the stability of the axial equilibrium is independent of the fear effect, what established mathematically. Comparing the regions of the basin of attraction in Figure \eqref{study_impact_of_fear}(a) and Figure \eqref{study_impact_of_fear}(b), we observe that an initial increase in fear level enlarges the stability region, thereby enhancing the chances of species survival. However, as fear further intensifies, it succeeds in producing an asymptotically stable coexistence state by safeguarding the interior equilibrium from oscillations, but the overall area of the stable region decreases. This showcases a rare yet fascinating duality of the fear effect in our system. In Figure \eqref{label biparameter}, we present a bi-parameter analysis to investigate the interaction between predation fear and cost-associated cooperation. Figure \eqref{label biparameter}(a) explores the dynamics of \( K \) versus \( p \) across varying degrees of partial cooperation, revealing that as predation pressure intensifies, prey species increasingly engage in anti-predator behavior, consistent with established findings \citep{wang2016modelling}. Notably, our results indicate that at a fixed predation rate (where the red dashed line is drawn), the intensity of anti-predator movement escalates as the degree of partial cooperation increases(the red dashed line intersects the corresponding Hopf bifurcation curve at a higher value of $K$), highlighting a novel insight into how cooperative strategies amplify behavioral responses to predation risk. On the other hand, Figure \eqref{label biparameter}(b) demonstrates that at higher levels of partial cooperation, a greater level of fear is required to induce a Hopf bifurcation (represented by the green curve in Figure \eqref{label biparameter}(b)). This suggests that with greater partial cooperation, prey are more inclined to exhibit anti-predator behavior, further highlighting the benefits of cooperation. In contrast, the blue curve in $N-p$ space reveals an opposing scenario: at higher predation rates, even a small degree of partial cooperation can induce a Hopf bifurcation, making the system more susceptible to destabilization at this stage. Furthermore, Wang et al.\citep{wang2016modelling} demonstrated that the dynamics of a prey-predator system in the presence of fear is susceptible to the functional response structure. This motivates us to investigate how various functional response structures influence system dynamics in the presence of varying degrees of partial cooperation and different levels of predation fear. Drawing inspiration from the methodology of Morozov et al. \citep{adamson2013can}, we explored how variations in different forms of the functional response term influence the system's behavior. Before that, we investigate system dynamical analysis for type-II functional response separately. As it shows a topologically similar outcome to that for the type-I functional response when taking $N$ as our central focus. To avoid redundancy, we have relocated the entire section discussing the numerical study for the type-II functional response to Appendix\eqref{Appendixtype2}.
\section{Robustness Through Structural Sensitivity Analysis}\label{structures}
\begin{table}[h]
    \centering
    \begin{tabular}{|c|c|c|c|}
    \hline
    Functional Response & Formulation & $a$ & $b$ \\
    \hline
    Holling & $\displaystyle g_h(y) = \frac{a_h}{1 + b_h y}$ & 0.385 & 0.265 \\ & & &\\
    Ivlev & $\displaystyle g_i(y) = \frac{a_i}{b_i(1 - \exp(-b_i y))}$ & 0.21 & 0.15\\ & & & \\
    Trigonometric & $\displaystyle g_t(y) = \frac{a_t}{b_t}\tanh(b_t y)$ & 0.27 & 0.18  \\ & & & \\
    \hline
    \end{tabular}
    \caption{ Estimated parameters for different functional form}
    \label{tab2: estimated parameters for different functional forms}
\end{table}
Inspired by previous works \citep{flora2011structural, adamson2013can, adamson2014defining, adamson2020identifying}, we incorporated three functional responses—Monod, Ivlev, and Tan-Hyperbolic—in our study. We selected parameters for the Monod-type response based on \citep{wang2016modelling} and used a parameter set consistent with the type-II functional response, supporting stable coexistence. We adopted all values shown in Table\eqref{tab2: estimated parameters for different functional forms}, ensuring they meet condition (3.3) in \citep{adamson2013can}. Additional parameters are \( r = 0.6, a = 0.2, c = 0.42, m = 0.1, d = 0.32 \), aligning the response curves as in \citep{flora2011structural}.
\begin{figure}[h!]
\begin{center}
\includegraphics[height=3.0in,width=\textwidth]{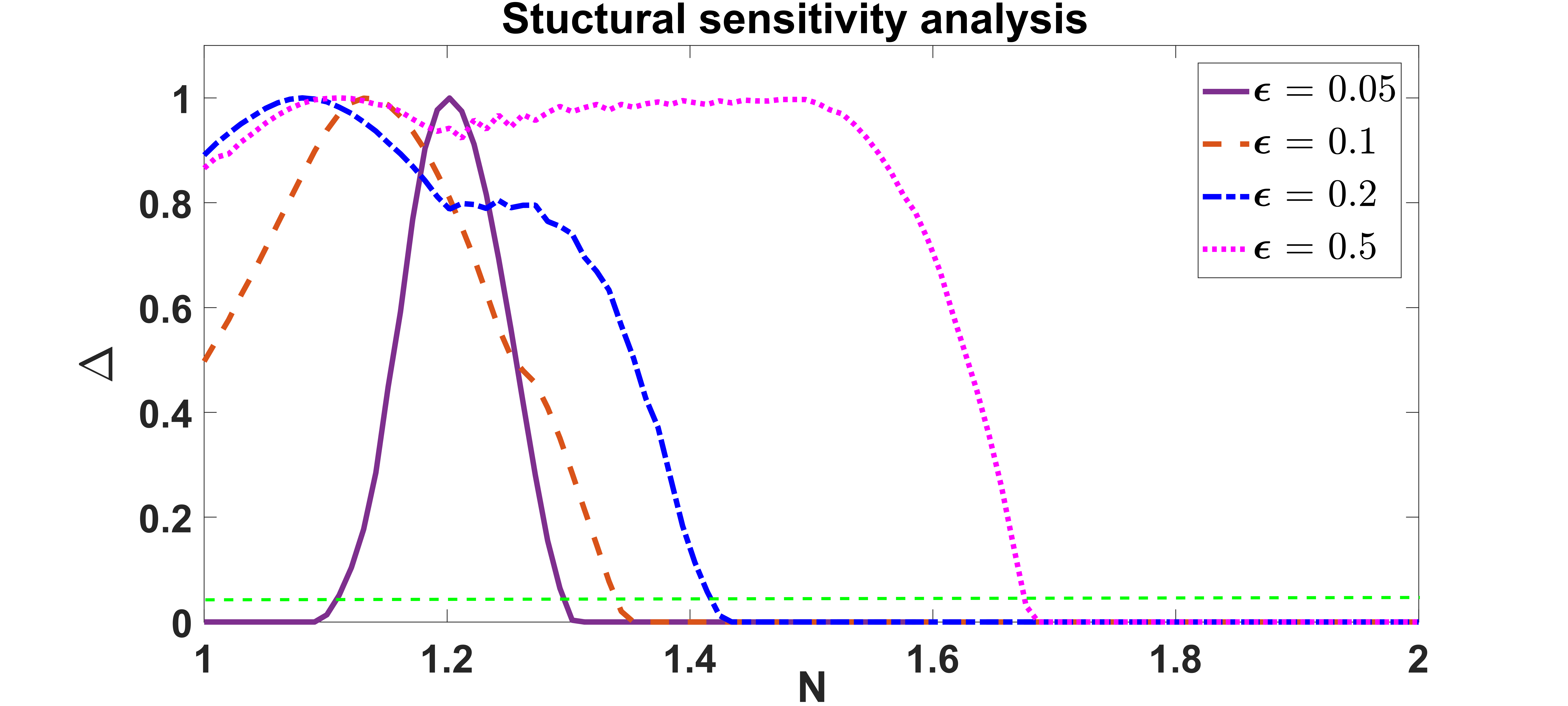}
    \end{center}
    \caption{Through this figure we explore the structural sensitivity of our model as the degree of partial cooperation ($N$) varies from $1$ to $2$, by choosing the Monod functional response as the base function of our study. Here three different $\epsilon$ values are 0.05(purple curve), 0.1(red curve), 0.2(blue curve) and 0.5 (pink curve) respectively. We use absolute distance for defining the $\epsilon_Q$ neighborhood. All the symbols and the definitions followed from the work \citep{adamson2013can}. Here x-axis represents the parameter $N$ and the y-axis represents the parameter $\Delta$. The green dashed line corresponds to the 'threshold sensitivity' of our system (at the level of $5\%$). That signifies that the degree of structural sensitivity below this margin is not sensitive.}
    \label{label fig 11}
    \end{figure}
Morozov. et.al \citep{adamson2013can} introduce the term 'degree of structural sensitivity'$(\Delta)$ to measure the likelihood that randomly selecting any two function sets from the $\epsilon_Q$-neighborhood independently would result in disparate predictions for the stability of the specified equilibrium and have analyzed this concept extensively with various examples in \citep{adamson2013can,adamson2014defining}. In a recent study \citep{sen2022bifurcation}, Morozov et al. investigated the robustness of the system dynamics through the lens of structural sensitivity.
\begin{figure}[h!]
\begin{center}
\includegraphics[height=3.5in,width=\textwidth]{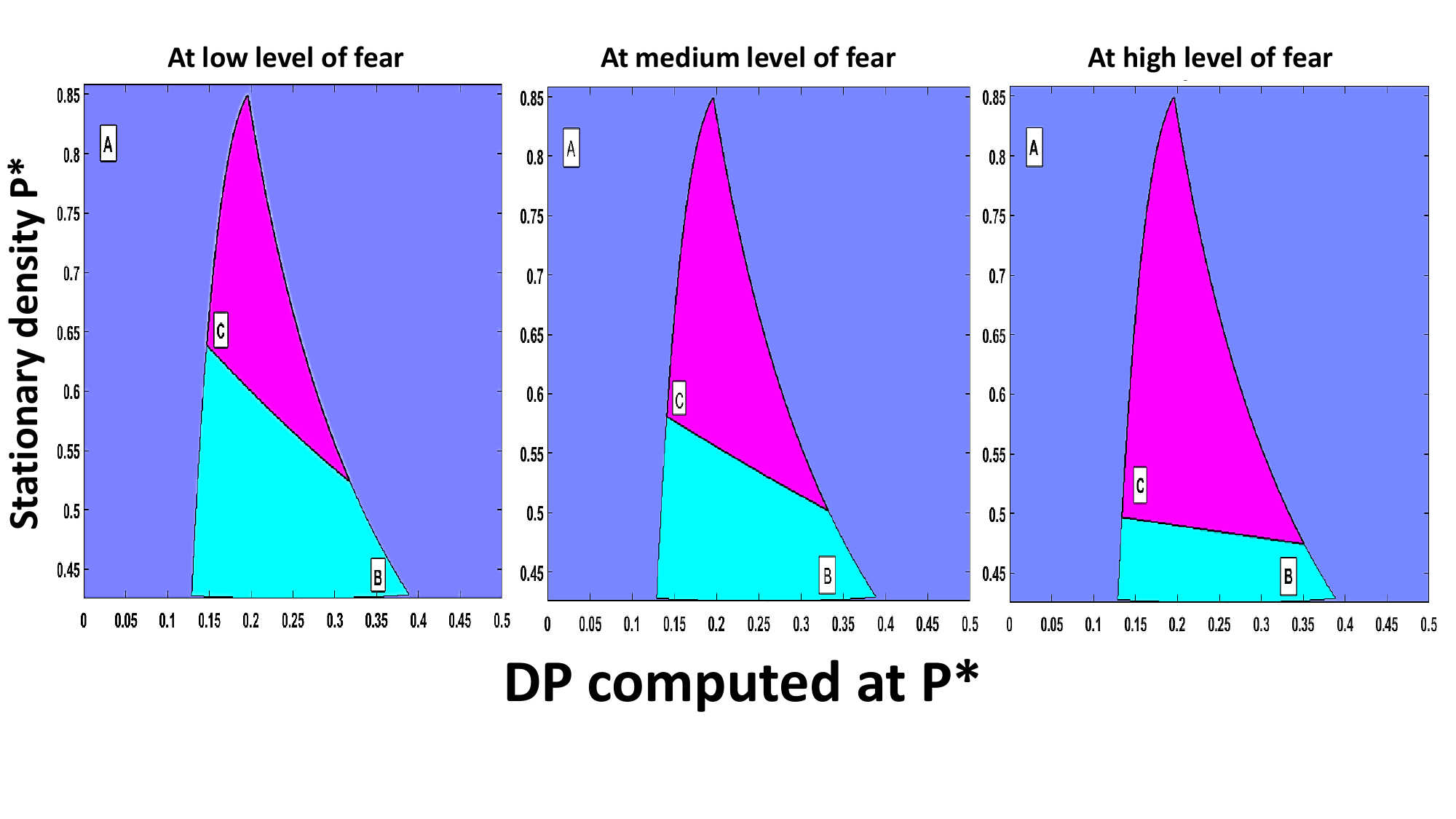}
    \end{center}
    \caption{By these three adjacent figures, we aim to capture how fear affects the sensitivity of our system. Here we plot the $p^*- \text{DP}$ plane, showing the area of feasible fixed point with possible DP value. The sky region(region \textbf{B}) shows the area of $p*$ values that are unstable and the pink shaded area (region \textbf{C}) for the region of stable fixed points. Whereas the blue region is the region of non-feasible points(region \textbf{A}). In this figure, the low level of fear means \(K=0.05\), at moderate fear, means \(K=0.5\) and we set \(K=5\) for high fear. }
    \label{label fig 12}
    \end{figure}
Having thoroughly examined the dynamics of our model, specifically the influence of fear on stability, we've noted compelling similarities between the dynamical behaviors of type-I and type-II functional responses when "$N$" is the bifurcation parameter. Now our objective is to evaluate the robustness of analytical behavior in light of these structural considerations. Our primary focus now lies in examining the sensitivity of our model dynamics to the mathematical structures assumed in Table\eqref{tab2: estimated parameters for different functional forms}, particularly across parameters "$N$" and "$K$". Additionally, to explore bi-parameter effects, we investigate sensitivity by varying the fear parameter "$K$" across the "$P^*-DP$" plane for a suitable chosen value of "$N$". Throughout, we maintain the definitions of variables $\Delta$, $\epsilon_Q$, $DP$, and $P^*$ consistent with those in  \citep{adamson2013can}. Figure\eqref{label fig 11} illustrates the degree of structural sensitivity ($\Delta$) for the system \eqref{main_model_1} over the parameter $N$ when the Monod type functional response is taken as the base function. In this representation, the parameter \(\Delta\) is plotted as the ordinate, while \(N\) is on the abscissa, creating an 'N-\(\Delta\)' space. The figure displays four distinct curves, each corresponding to different values of \(\epsilon\). The Euclidean distance rule, as described in \citep{adamson2013can}, is taken as the basis for this study. The solid violet curve represents the degree of structural sensitivity for \(\epsilon = 0.05\), the brown single dashed curve is for \(\epsilon = 0.1\), whereas the blue double dashed curve is for \(\epsilon = 0.2\) and at last the pink dotted curve corresponds to the curve for \(\epsilon = 0.5\). The other parameters are held constant at \(r = 0.6, K = 0.05, a = 0.2, a_h = 0.385, d = 0.373, c = 0.6, m = 0.1, b_h = 0.265\). As we observe the increase in "$N$", the system undergoes transitions from a stable region to an unstable one. Initially, when the perturbation value($\epsilon$) is very low, changes in the parameter"$N$" have minimal impact on stability at its initial values. However, upon surpassing a certain threshold, the value of $\Delta$ rapidly escalates, quickly reaching its maximum and inducing instability in the system. Interestingly, even a slight increase in "$\epsilon$" (from 0.05 to 0.1 ) reveals that structural sensitivity affects the system from its outset, highlighting its high sensitivity to initial conditions. Moreover, in this scenario, the system demonstrates enhanced resistance to destabilization caused by the parameter "$N$" for a longer duration than the previous one. As we increase the value of $\epsilon$, we observe that the value of $\Delta$ rises progressively, even at very initial values of "$N$". Concurrently, the range of resistance to destabilization expands accordingly. In Sen et al.'s work \citep{sen2022bifurcation}, the authors use the term "plasticity to resist destabilization" and demonstrate its presence in their system, a characteristic that holds true for our case as well. Furthermore, we can assert that our system exhibits high sensitivity to perturbations($\epsilon$) of the base function. We've also examined these results for the other two functional responses and found similar outcomes that we have got by taking type-II functional response form, indicating that the dynamical property with the bifurcation results of the system \eqref{main_model_general} is topologically robust to the functional response structure. Because, we have verified the basic bifurcation structure of our model for Type-I and Type-II responses, and the results indicate they remain topologically similar. However, we intentionally avoid detailing these findings to prevent repetition in the text. \\
Taking our investigation one step further, we are keen on understanding the impact of fear on the structural sensitivity of our model. To achieve this, we set a suitable value of \(N\) at $1.2$ and explore the fear effect at three different intensities: minimum fear (\(K = 0.05\)), medium fear (\(K = 0.5\)), and high-level fear (\(K = 5\)) respectively. The results are depicted in three figures, where the other parameters remain consistent with previous values. By analyzing these three figures (Figures \eqref{label fig 12}), our observations align with the earlier claim that fear indeed exhibits a stabilizing effect. As the value of the fear parameter \(K\) increases, the region of stability (pink area) expands concurrently, and the area containing unstable $P^*$ decreases simultaneously.
\section{Results and Discussion}
While much of the research centers on the Allee effect in prey populations, often by explicitly incorporating the Allee threshold, recent studies have also begun to consider predator-driven Allee effects or Allee effect in predators by utilizing a special type of functional response \citep{kramer2010experimental,sen2022bifurcation}. However, in our study, the approach is different. In our work, the Allee effect is induced as a result of including the degree of partial cooperation. This approach is not only intriguing but also aligns closely with reality, as Allee effects often arise due to inter-specific or intra-specific positive interactions such as group defense or hunting cooperation or even from the fear of predation \citep{sasmal2022modeling}. Allee effect is not just an outcome of some social interaction but the level of social interaction within a species could indicate the degree to which it experiences Allee effects \citep{stephens1999consequences}. When examining Figure\eqref{label fig 1}, it becomes evident that the strong Allee effect is induced in our system and its threshold increases accordingly with the degree of partial cooperation. This observation aligns precisely with the assertion that the intensity of the Allee effect may be influenced by the extent of facilitation or degree of cooperation \citep{courchamp1999inverse}. This alignment between our dynamical results and existing biological evidence underscores the significance of this observation in our study.\\
Both under-crowding and overcrowding can pose limitations \citep{courchamp1999inverse}. In species that live in groups, if the group size becomes excessively large, there might not be adequate food to support all individuals. Conversely, if the group is too small, the advantages of cooperative foraging may not be fully realized \citep{rita1998group}. This emphasizes that the occurrence of the Allee effect or the maintenance of stability, despite the probability of extinction, is largely dependent on density in nature. Therefore, the persistence of a system despite adverse conditions is a myth; density-dependent extinction possibility is unavoidable. This assertion adds credibility to the existence of the extinction equilibrium as a stable state throughout the study due to the degree of partial cooperation-driven strong Allee effect, aligning well with existing literature. Consequently, this induces bistability in our system at each state $\mathcal{R}_1,\mathcal{R} _2, \mathcal{R}_3$ depicted in Figure \eqref{bifurcation_timeseries}, resulting in more intricate and compelling dynamics and enhancing the biological relevance of our model.\\
Despite potentially increasing the risk of extinction through the introduction of a strong Allee effect, cooperation fosters close associations among individuals, providing substantial benefits. In our model, prey species demonstrate enhanced fitness with increasing degrees of partial cooperation, leading to an elevation in their stable state density up to a certain threshold value of $N = N_{TB}$, in accordance with established biological principles. However, despite these increases, the available resources remain insufficient to meet the predator's requirements and as a result, regardless of the initial density, the predator population ultimately goes to extinction (see $\mathcal{R}_1$ in Figure\eqref{bifurcation_timeseries}). This suggests that the predator-free axial equilibrium not only maintains its stability but also shifts towards the right, as described in Figure \eqref{label fig 1}, thereby increasing the persistence capacity of prey density simultaneously. It demonstrated how cooperation can enhance competitors' abilities by augmenting their carrying capacities, which may underpin the existence of dynamic carrying capacity, aligning with existing biology \citep{zhang2003mutualism}. The concept of each region sustaining a specific level of species richness has been thoroughly explored in academic discourse \citep{rabosky2015species,cornell2013regional}. However, there are counterarguments against the idea of a fixed carrying capacity for species richness \citep{harmon2015species}. We propose that these challenges arise when carrying capacity is narrowly viewed as a strict upper limit on species numbers that an environment can accommodate. Instead, Storch et al. \citep{storch2019carrying} suggest it should be understood as a stable equilibrium arising from diversity-dependent dynamics of species richness, as we outline in our work (see in Section \eqref{model formulation}). These studies suggest that stable equilibria can occur even when not all resources are fully utilized. Despite the potential for increased pressure on resource consumption with higher population densities, species can employ cooperative strategies to optimize resource utilization and maintain themselves within their niche. This challenges the notion of species richness carrying capacity as a saturated niche space or hard ecological limit, instead supporting the concept of dynamic carrying capacity in a system. Our analysis successfully incorporates these cooperation-mediated dynamic results into our model through the concept of the degree of partial cooperation which remains largely unexplored.\\
In a predator-prey system, even with cooperation occurring only between two prey individuals, the predator population can't remain unaffected. In our case, if we increase $N$ further, immediately after crossing the threshold value $N_{TB}$, the coexistence equilibrium becomes stable as the increased prey density can now fulfill the predator's resource requirements. Consequently, with a further increase in the degree of partial cooperation for having sufficient resources, predators have to spend less energy and time searching for food, making predation easier. This ultimately increases the growth of the predator (as depicted in $\mathcal{R}_2$ of Figure \eqref{bifurcation_timeseries}). Interestingly, an increase in predator density usually leads to increased predation pressure in the system, resulting in a decrease in prey density. However, in our case, due to the higher degree of partial cooperation between prey species, it somehow helps sustain stable prey density at a certain constant level. Cooperation promoting coexistence in a predator-prey system is a well-known biological result \citep{zhang2003mutualism}, and our findings readily support this fundamental property of cooperation.\\
 In the context of a Holling type I functional response, typically associated with stabilizing effects, population oscillations observed in empirical studies are often linked to various factors such as spatial, seasonal, or environmental fluctuations. However, in our investigation, we found that surpassing the threshold value $N_{H}$ for the degree of cooperation triggers stable oscillations through a supercritical Hopf bifurcation, even with a type I functional response. The degree of partial cooperation serves as the underlying biological rationale. This increase in parameter value $N$ heightens predation pressure and augments carrying capacity, resulting in intensified inter-specific competition among prey species and potential destabilization of the system (Numerically established in Figure\eqref{label biparameter}). Interestingly, this increase in carrying capacity signifies the occurrence of the well-established biological phenomenon, the paradox of enrichment, which promotes oscillations within the system. These reasons can serve as a satisfactory underlying biological rationale for the discovery of oscillation in our system \eqref{main_model_general}.\\
If we further increase the value of $N$ beyond $N_{H}$, rapid growth of prey species followed by a rapid decline in prey population density due to increased predation occurs in the system. In a bounded environmental system, rapid predation of prey species can lead to over-exploitation, potentially causing a system collapse. This phenomenon is often associated with the presence of a heteroclinic orbit in predator-prey systems characterized by a strong Allee effect \citep{van2007heteroclinic}. Our study confirms the existence of such a heteroclinic orbit, indicating that when $N = N_{HB}$, the system undergoes a catastrophic collapse. This regime shift results in the merger of the two attractors, namely, the extinction equilibrium and oscillatory coexistence at the heteroclinic curve. In nature, as individuals strive to maximize the benefits of cooperation, their efforts frequently inadvertently create conditions that ultimately lead to its collapse \citep{stewart2014collapse}. In our case, the phenomenon of over-exploitation at higher degrees of partial cooperation serves as the possible biological rationale for this outcome. Indeed, in nature, the possibility of encountering uncertain and devastating situations is ever-present for any ecological system. Therefore, studying the maintenance of stability in the environment remains a crucial topic to explore. In our case, the degree of cooperation among prey species directly influences reproduction, and beyond a certain threshold, it leads to over-exploitation. In contrast, the fear of predation reduces reproduction rates, limiting prey availability for predators. Thus, fear of predation plays a critical role in preventing potentially catastrophic events like system collapse or species extinction by acting as a natural regulator. this leads to another significant finding of this study that fear can have a significant influence on the stability and persistence of the system for both type-I other than type-II functional responses, contrary to the claim that fear does not affect promoting stability in Wang et al.'s model for type-I functional response \citep{wang2016modelling}. Incorporating cost-associated cooperation,  a fundamental strategy in predator-prey interactions, provides a compelling rationale. Our result forecast that while fear typically resolves the 'paradox of enrichment' in predator-prey models by stabilizing oscillations \citep{wang2016modelling}, we show that under certain levels of cost-associated cooperation, the 'paradox of enrichment' can be reversed, even at higher level fear, leading to a novel dynamics that remain difficult to explore with type -I functional response. However, fear is not universally beneficial, as increasing fear levels reduce prey reproduction. Furthermore, if positive interactions among prey unintentionally increase predator density, the heightened fear of predation triggers a rapid decline in prey density, followed by a subsequent drop in predator density. This suggests that systems with higher predator densities and elevated fear levels are more prone to extinction. This rarely explored phenomenon underscores the destabilizing nature of fear, as depicted by the basin of attraction in Figure \eqref{fear basin}, which captures the system's vulnerability to collapse. At higher predation rates, selective or cognitive cooperative strategies among prey can become costly to the system. Our bi-parameter studies confirm that even minimal partial cooperation destabilizes the system under high predation. However, when fear of predation is present, increased partial cooperation amplifies anti-predator defenses, emphasizing the dilemma of cost-associated cooperation and reinforcing the biological significance of our findings. Structural sensitivity analysis further indicates that these rich dynamical results remain robust across functional response structures.\\
This study also aims to explain the rationale behind the reproduction term proposed by Hatton et al. \citep{hatton2015predator} for species like mammals, primates, or other terrestrial animals, where cost-associated partial cooperation is more commonly observed. The strength of this study lies in highlighting the significance of the mean-field power law approach in modeling ecological dynamics such as demographic Allee effects, regime shifts, and catastrophic collapses—typically explored through more complex models. As most biological processes, including cooperation, are density-dependent, the framework aligns with real-world scenarios. For example, like hunting cooperation \citep{alves2017hunting}, in our case partial cooperation is advantageous up to a certain point (regions $\mathcal{R}_1$ and $\mathcal{R}_2$), but beyond this, it can lead to system collapse (region $\mathcal{R}_4$). This study provides key mathematical insights into identifying and analyzing equilibrium points in mean-field models with transcendental equations. Our examination of heteroclinic orbits captures dynamics around saddle nodes, offering a novel approach for predator-prey modeling. Additionally, it clearly distinguishes heteroclinic bifurcations from homoclinic ones, which remain volatile in numerical simulations. Investigating the probability of extinction in such systems would be valuable \citep{rakshit2024regime}. Additionally, exploring a multi-connected network with associated costs of cooperation having special fluctuation in different patches presents an intriguing avenue for further research.  
\section*{Data Availability}
The datasets used and/or analysed during the current study available
from the corresponding author on reasonable request.
\appendix
\section*{Appendix}
\section{ Uniform Persistence}
Here, we discuss the uniform persistence of our main model. To maintain biological relevance in our discussion, we define uniform persistence only in the positive quadrant of the $\mathbb{R}^2$ plane. Firstly, we'll demonstrate the positivity of our system, followed by a discussion on its boundedness.
 \subsubsection*{ Positiveness of the system}
To prove the positiveness of the system, we take the first equation and integrate both sides we have, 
$$
\begin{array}{l}
\int_0^t\frac{d x}{x}=\int_0^t \left(\frac{r_0 x(s)^N}{1+k y(s)}-a x(s)-p y(s)-d\right) d s \\
x(t)=x(0) \exp \left[\int _ { 0 } ^ { t } \left(\frac{r x^N(s)}{1+k y(s)}-a x(s)-p y(s)-d\right)ds\right].
\end{array}
$$
Similarly, if we take the second equation of the system [ref number]
$$
\begin{array}{l}
\int_0^t \frac{d y}{d y}=\int_0^t(q x(s)-m) d s \Rightarrow
y(t)=y(0) \exp \left[\int_0^t(q x(s)-m) d s\right] .
\end{array}
$$
For the biological rationale as we will take, $x(0) \& y(0)>0$, then from the previous discussion it is clear that $x(t) \& y(t)>0$.
\subsubsection*{ Boundedness of the system:}
Let us first assume that,$ z = u+v $ and $ \alpha $ be any positive real number. Also, we have assumed that $1 \leqslant N<2$ in our model and also as  $c<1$ then $q<p$. Now,
\begin{itemize}
\item{Case-1:}
if $u \leq 1,$ then $u^N\leq1$. which implies that, 
\begin{align*}
\frac{d z}{d t}+\alpha z&=\frac{du}{d t} +\frac{d v}{ds}+\alpha(u+v)
= \frac{r_0 u^N}{1+k v}-a u^2 + p u v -d u+\alpha u+q u v-m v+\alpha v ,\\
\leq& r_0-a u^2+\alpha u-(p-q) u v-(m-\alpha) v ,
\leq r_0+a u\left(\frac{\alpha}{a}-u\right)-(m-\alpha) v , \\
\leq& r_0+B-(m-\alpha) v .
\end{align*}
Where $B=\frac{\alpha^2}{4 a}$ and also for a suitable chosen value of $\alpha>0$, we can say that, $\frac{d z}{d t}+\alpha z$ will be bounded.\\
\item{Case-2:}
If \( u > 1 \), then \( u^N < u^2 \). Therefore, if \( r_0 \leq a \), then \(\frac{du}{dt} < 0\) when \( d > 0 \), indicating that the prey species will go extinct, which is not our concern. Hence, we will choose \( r_0 > a \).
Now,
\begin{align*}
\frac{d z}{d t}+\alpha z & =\frac{r_0 u^N}{1+k v}-a u^2-p u v-d u+\alpha u+q u v - mv+\alpha v,\\
& \leq r_0 u^2-a u^2+\alpha u+(\alpha -m) v 
 =\left(r_0-a\right) u^2+\alpha u+(\alpha-m) v,\\
&=\left(r_0-a\right) u\left(\frac{\alpha}{r_0-a}-u\right)+(\alpha-m) v \leq M_2+(\alpha-m) v.
\end{align*}
Where, $M_2=\frac{\alpha^2}{\left|a-r_0\right|}$ and also for a suitable choice of $\alpha$, $\frac{d z}{d t}+\alpha z \leq M_2$.
So, by analyzing both cases we can conclude, that the system will remain bounded (as for $\alpha = m$ both the cases hold together).\\
\end{itemize}
\section{ Transcritical Bifurcation}\label{APP_trans}
Let, $x=\left(\begin{array}{l}u \\ v\end{array}\right)$ and $f(x)=D\left(\begin{array}{l}u \\ v\end{array}\right)$ where, $D$ is the time derivative.\\
Hence, $f(x)=\left[\begin{array}{c}\frac{r_0 u^N}{1+K v}-a u^2-d u-p u v \\ c p u v-m v\end{array}\right]$
$$
J(\beta, 0)=\left[\begin{array}{cc}
N r_0 \beta^{N-1}-2 a \beta-d & -K r_0 \beta^N-p\beta \\
0 & q \beta-m
\end{array}\right]
$$
Now, we are going to check the bifurcation at $m_0=m=q \beta$
$$
\begin{array}{l}
J(P_1, m_0)=\left[\begin{array}{cc}
\left(N r_0 \beta^{N-1}-2 a \beta-d\right) & \left(-Kr_0 \beta^N-p \beta\right) \\
0 & 0
\end{array}\right] \\
\text {  }  \text {  }  \text { }
\end{array}
$$
Now if $v' = (v_1,v_2)^{T}$ and $w' = (w_1,w_2)^T$ are the eigenvectors corresponding to the null eigenvalue of the matrices $J_m(B, 0)$ and $\left[J_m(B, 0)\right]^{T}$ respectively.Then, $v'=\left(-p_2 v_2 / p_1, v_2\right)$ and
$w'=\left(0, w_2\right) \text {. }$
Again if we take the partial derivative of $f(x)$ concerning the parameter $m$, we have
$$
\left.f_m(x)\right|_{(\beta, 0)}=\left(\begin{array}{c}
0 \\
-v
\end{array}\right)=\left(\begin{array}{l}
0 \\
0
\end{array}\right)
$$
Hence, $w'^{T} f_{\left(P_1, m_0\right)}=0$.
Again if we take 
$
D f_{\left(P_1,m_0\right)}(x)=\left(\begin{array}{cc}
0 & 0 \\
0 & -1
\end{array}.\right)
$
So,
$
w'^{T}\left[D f_{(P_1, m_0)} \cdot v'\right]=-w_2 v_2 \neq 0 .
$
Now we calculate 
$D^2 f(x)=\left(\begin{array}{cccc} S_1 & S_2 & S_3 & S_4 \\ 0 & q & q & 0\end{array}\right)$
where,
$$
\begin{array}{l}
S_1=N(N-1) r_0 \beta^{N-2}-2 a , 
S_2=-K r_0 N \beta^{N-1}-p , 
S_3=-K r_0 N \beta^{N-1}-p , 
S_4=-2 K^2 r_0 \beta^N
\end{array}
$$
Hence, $w'^{T} D^2 f(x)_{(\beta, 0)}(v', v')=2 q w_2 v_1 v_2 \neq 0$.
\section{Calculations for Lyapunov exponent}\label{lyapunov}
\subsection*{Compute Second-Order Partial Derivatives }
\begin{align*}
Q_1^{11} &= \frac{\partial^2 Q_1}{\partial y_1^2} \bigg|_{(0, 0; N_H)}
= \frac{\partial}{\partial y_1} \left[ \left( c_{20} c_{01}^2 - c_{11} c_{01} c_{10} + c_{02} c_{10}^2 \right) y_1 + \lambda_H \left( 2 c_{02} c_{10} - c_{11} c_{01} \right) y_2 \right] \bigg|_{(0, 0; N_H)}\\ 
&= \left( c_{20} c_{01}^2 - c_{11} c_{01} c_{10} + c_{02} c_{10}^2 \right),\\
Q_1^{22} &= \frac{\partial^2 Q_1}{\partial y_2^2} \bigg|_{(0, 0; N_H)}= \frac{\partial}{\partial y_2} \left[ \lambda_H \left( 2 c_{02} c_{10} - c_{11} c_{01} \right) y_1 + \lambda_H^2 c_{02} y_2 \right] \bigg|_{(0, 0; N_H)} = \lambda_H^2 c_{02},\\
Q_1^{12} &= \frac{\partial^2 Q_1}{\partial y_1 \partial y_2} \bigg|_{(0, 0; N_H)} = \frac{\partial}{\partial y_2} \left[ \left( 2 c_{02} c_{10} - c_{11} c_{01} \right) y_1 + \lambda_H \left( c_{12} c_{01} c_{10}^2 - c_{03} c_{10}^3 + c_{30} c_{01}^3 - c_{21} c_{01}^2 c_{10} \right) y_1^2 \right] \bigg|_{(0, 0; N_H)} \\
&= \lambda_H \left( 2 c_{02} c_{10} - c_{11} c_{01} \right),\\
Q_2^{11} &= \frac{\partial^2 Q_2}{\partial y_1^2} \bigg|_{(0, 0; N_H)} = \frac{\partial}{\partial y_1} \left[ \left( d_{20} c_{01}^2 - d_{11} c_{01} c_{10} + d_{02} c_{10}^2 \right) y_1 + \lambda_H \left( 2 d_{02} c_{10} - d_{11} c_{01} \right) y_2 \right] \bigg|_{(0, 0; N_H)} \\
&= \left( d_{20} c_{01}^2 - d_{11} c_{01} c_{10} + d_{02} c_{10}^2 \right),\\
Q_2^{22} &= \frac{\partial^2 Q_2}{\partial y_2^2} \bigg|_{(0, 0; N_H)} = \frac{\partial}{\partial y_2} \left[ \lambda_H \left( 2 d_{02} c_{10} - d_{11} c_{01} \right) y_1 + \lambda_H^2 d_{02} y_2 \right] \bigg|_{(0, 0; N_H)} = \lambda_H^2 d_{02},\\
Q_2^{12} &= \frac{\partial^2 Q_2}{\partial y_1 \partial y_2} \bigg|_{(0, 0; N_H)} = \frac{\partial}{\partial y_2} \left[ \left( 2 d_{02} c_{10} - d_{11} c_{01} \right) y_1 + \lambda_H \left( d_{12} c_{01} c_{10}^2 - d_{03} c_{10}^3 + d_{30} c_{01}^3 - d_{21} c_{01}^2 c_{10} \right) y_1^2 \right] \bigg|_{(0, 0; N_H)} \\
&= \lambda_H \left( 2 d_{02} c_{10} - d_{11} c_{01} \right).
\end{align*}
\subsection*{Compute the Third-Order Partial Derivatives}
\begin{align*}
Q_1^{111} &= \frac{\partial^3 Q_1}{\partial y_1^3} \bigg|_{(0, 0; N_H)}= \frac{\partial}{\partial y_1} \left[ c_{20} c_{01}^2 - c_{11} c_{01} c_{10} + c_{02} c_{10}^2 \right] \bigg|_{(0, 0; N_H)} = 0,\\
Q_1^{122} &= \frac{\partial^3 Q_1}{\partial y_1 \partial y_2^2} \bigg|_{(0, 0; N_H)} = \frac{\partial}{\partial y_2} \left[ \lambda_H \left( 2 c_{02} c_{10} - c_{11} c_{01} \right) \right] \bigg|_{(0, 0; N_H)} = 0,\\
Q_1^{212} &= \frac{\partial^3 Q_1}{\partial y_2 \partial y_1^2} \bigg|_{(0, 0; N_H)} = \frac{\partial}{\partial y_2} \left[ \left( c_{12} c_{01} c_{10}^2 - c_{03} c_{10}^3 + c_{30} c_{01}^3 - c_{21} c_{01}^2 c_{10} \right) y_1^2 \right] \bigg|_{(0, 0; N_H)} = 0,\\
Q_1^{222} &= \frac{\partial^3 Q_1}{\partial y_2^3} \bigg|_{(0, 0; N_H)} = \frac{\partial}{\partial y_2} \left[ \lambda_H^2 c_{02} \right] \bigg|_{(0, 0; N_H)}= 0,\\
Q_2^{111} &= \frac{\partial^3 Q_2}{\partial y_1^3} \bigg|_{(0, 0; N_H)}= \frac{\partial}{\partial y_1} \left[ d_{20} c_{01}^2 - d_{11} c_{01} c_{10} + d_{02} c_{10}^2 \right] \bigg|_{(0, 0; N_H)} = 0,\\
Q_2^{122} &= \frac{\partial^3 Q_2}{\partial y_1 \partial y_2^2} \bigg|_{(0, 0; N_H)} = \frac{\partial}{\partial y_2} \left[ \lambda_H \left( 2 d_{02} c_{10} - d_{11} c_{01} \right) \right] \bigg|_{(0, 0; N_H)} = 0,\\
Q_2^{212} &= \frac{\partial^3 Q_2}{\partial y_2 \partial y_1^2} \bigg|_{(0, 0; N_H)}= \frac{\partial}{\partial y_2} \left[ \left( d_{12} c_{01} c_{10}^2 - d_{03} c_{10}^3 + d_{30} c_{01}^3 - d_{21} c_{01}^2 c_{10} \right) y_1^2 \right] \bigg|_{(0, 0; N_H)} = 0,\\
Q_2^{222} &= \frac{\partial^3 Q_2}{\partial y_2^3} \bigg|_{(0, 0; N_H)}=\frac{\partial}{\partial y_2} \left[ \lambda_H^2 d_{02} \right] \bigg|_{(0, 0; N_H)}= 0.
\end{align*}
\section{Simulation with type-II functional response}\label{Appendixtype2}
\begin{figure}[h!]
\centering
 \subfloat[\centering shows the bistability between the non-saddle axial equilibrium point and the extinction equilibrium for our parameter set.]
{\includegraphics[height=4.3cm,width=5.7cm]{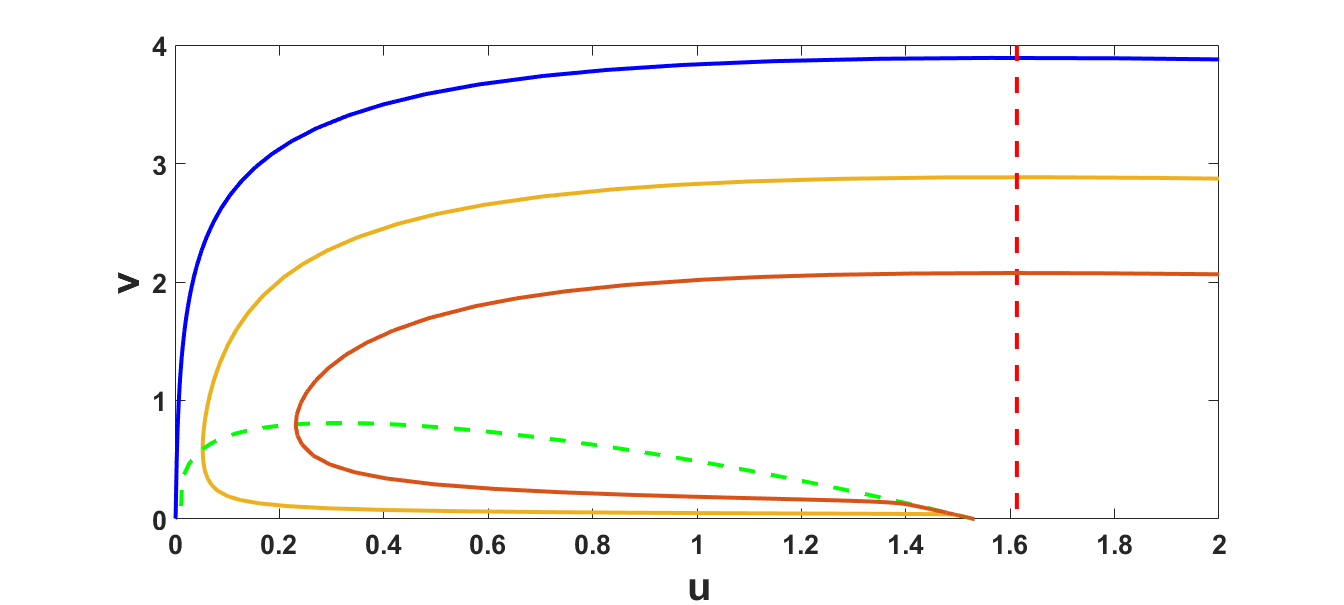}} 
\quad
   \subfloat[\centering The maroon and yellow curves asymptotically converge to the interior fixed point and the blue curve goes to extinction. The parameter values and other details are in the description.]{\includegraphics[height=4.3cm,width=5.7cm]{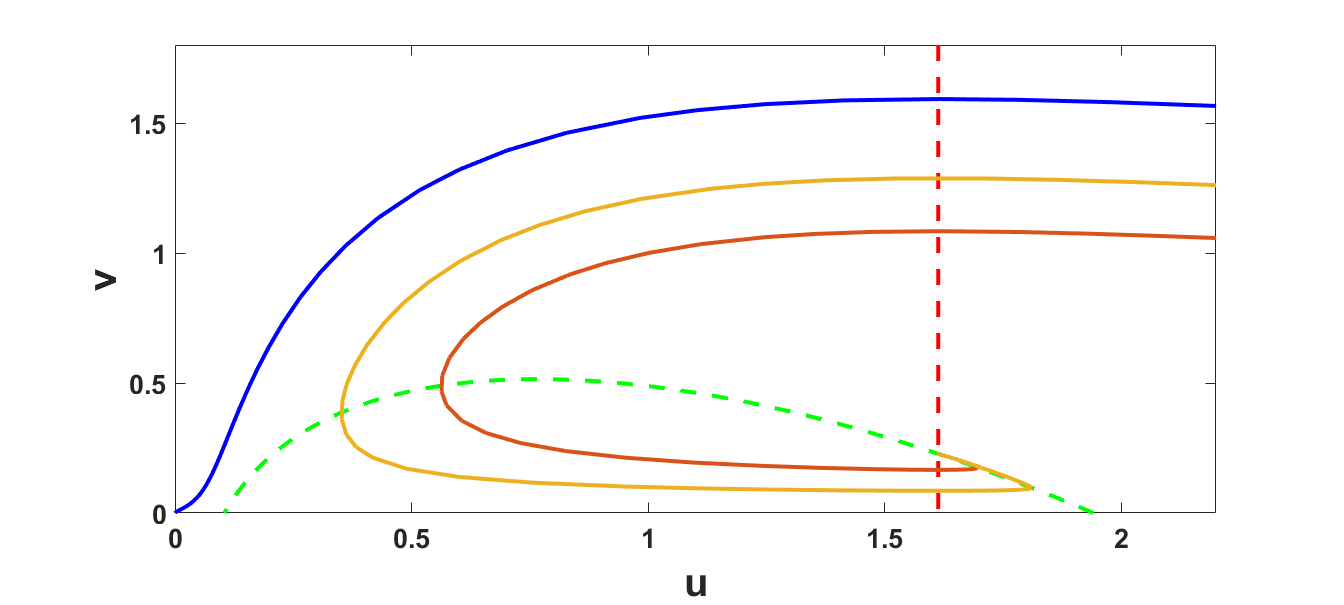}}
    \caption{correspond to the change of stability of axial equilibrium with the increasing value of N}%
    \label{fig:example1}
     \label{label fig 3}
    \end{figure}
In this section, we explore both type-I and type-II functional responses, focusing primarily on type-II dynamics while acknowledging their topological equivalence in bifurcation structure over \( N \). By plotting prey and predator nullclines, we highlight equilibrium points and bistability for both responses. Green dotted lines represent prey nullclines, while red vertical lines depict predator nullclines. For simplicity, we use \( r \) instead of \( r_0 \) in the simulations. The parameter values utilized for this analysis are as follows: \(r = 0.6\), \(N = 1.1\), \(K = 0.0\), \(a = 0.2\), \(a_h = 0.385\), \(d = 0.32\), \(c = 0.42\), and \(m = 0.1\) and  \(b_h = 0.265\).
\begin{itemize}
    \item Study the effect of partial cooperation without fear effect:
\end{itemize}
\begin{figure}[h!]
\centering
 \subfloat[\centering shows the bistability of the system between the stable limit cycle and the extinction equilibrium point when \(N = 1.415\).]
{\includegraphics[height=4.3cm,width=4.1cm]{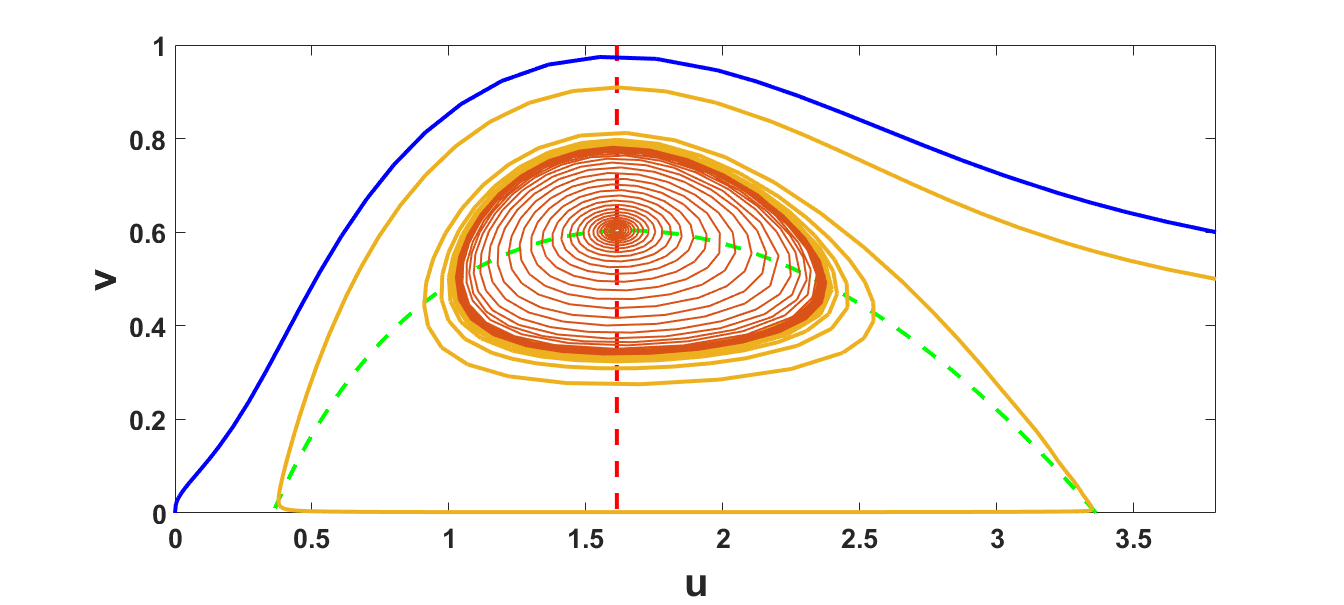}} 
\quad
   \subfloat[\centering When $N=1.4255$ situation is very similar to the previous figure except the amplitude of the oscillation increases.]{\includegraphics[height=4.3cm,width=4.1cm]{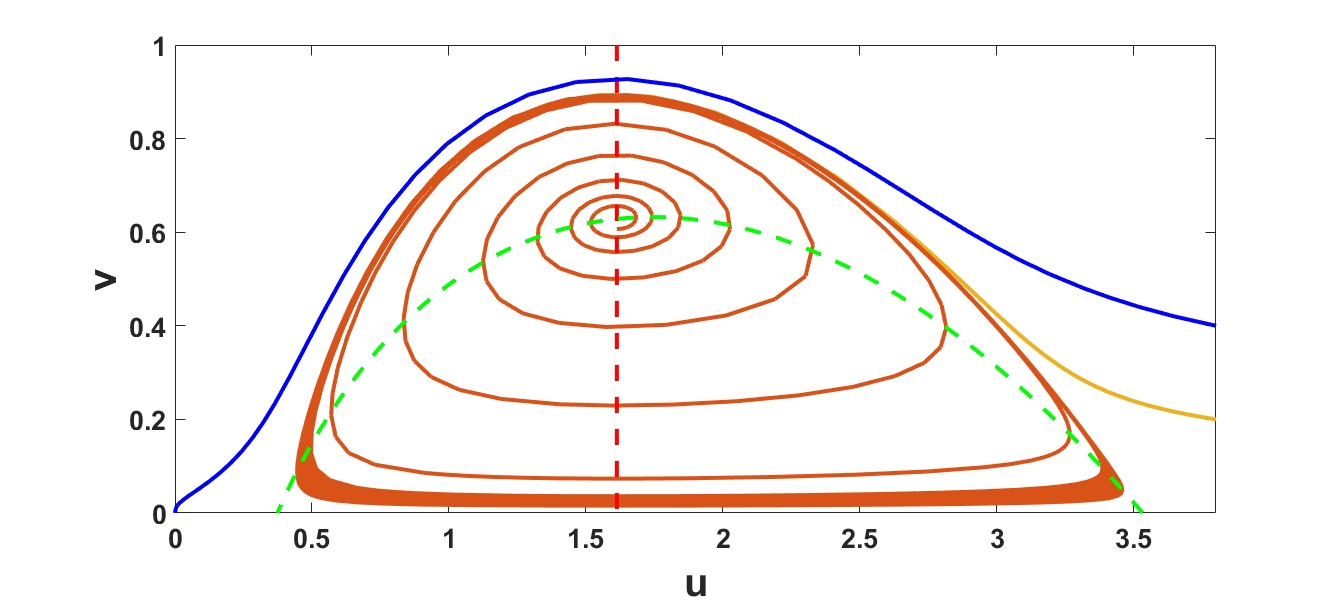}}
   \quad
   \subfloat[\centering Now the value of N increased to 1.45 but the rest of the parameters remain the same but all the trajectories ultimately go to the origin.]{\includegraphics[height=4.3cm,width=4.1cm]{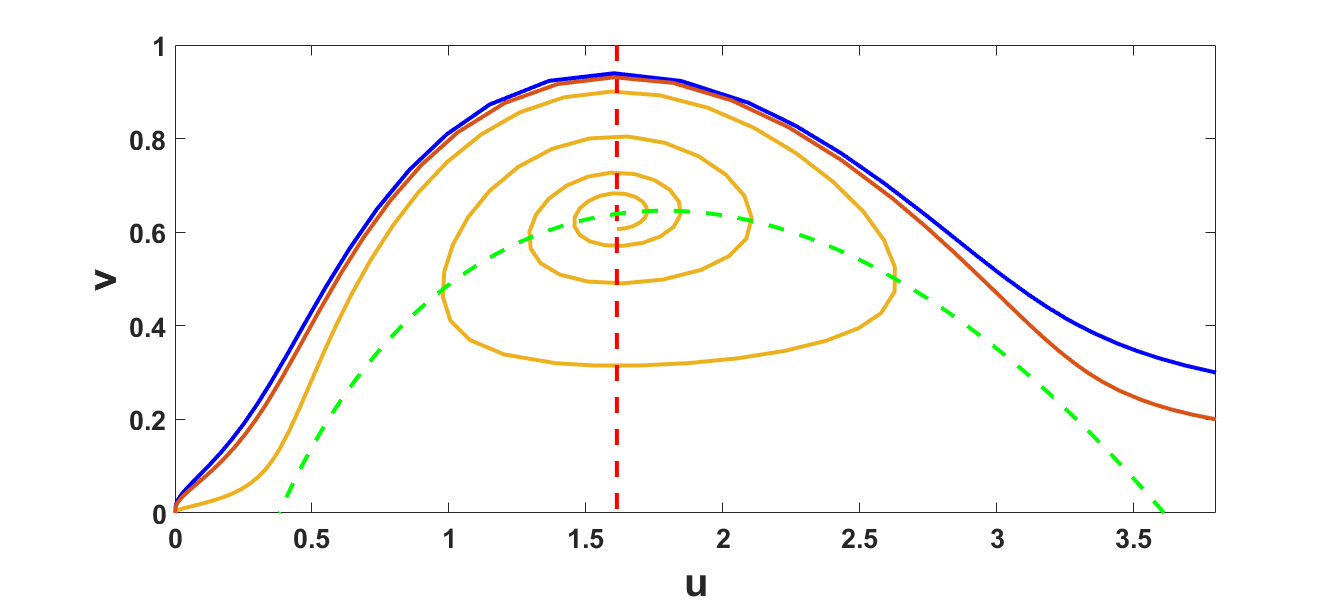}}
    \caption{correspond to the change of stability of interior equilibrium with the increasing value of N}%
    \label{fig:example2}
     \label{label fig 5}
\end{figure}
Figure \eqref{label fig 3}(a) shows two axial equilibrium points: a saddle-node at the left intersection of the prey nullcline with the u-axis and a stable node. The origin is a stable fixed point, and no interior equilibrium exists since the prey and predator nullclines do not intersect. The system exhibits bistability between the origin and the predator-free equilibrium. The blue trajectory from $(3,3.8)$ converges to the origin, while the yellow and maroon trajectories from $(3,2.8)$ and $(3,2)$ converge to the predator-free equilibrium. In Figure \eqref{label fig 3}(b), with \(N = 1.25\), both axial equilibria become saddle nodes, and the nullclines intersect, forming a stable interior equilibrium via a transcritical bifurcation at \(N = 1.14342580\). Bistability now exists between the origin and the coexisting equilibrium, enriching ecosystem dynamics. However, small perturbations in initial conditions could lead to extinction as we see the interesting fact that the initial point of the blue curve and the yellow curve are very near to each other($(3.1.5)$and $(3,1.25)$ respectively), adding complexity to the model. In Figures \eqref{label fig 5}(a) and (b), the stable interior node undergoes a supercritical Hopf bifurcation at \(N = 1.41046037\). As \(N\) increases, the oscillation amplitude grows, and the limit cycle eventually collides with the separatrix, disappearing through a heteroclinic bifurcation at \(N = 1.42623326\) but bi-stability is noted. Afterward, the origin becomes the sole attractor, resulting in global stability at the origin.
\begin{figure}[h!]
\begin{center}
\includegraphics[height=2in,width=\textwidth]{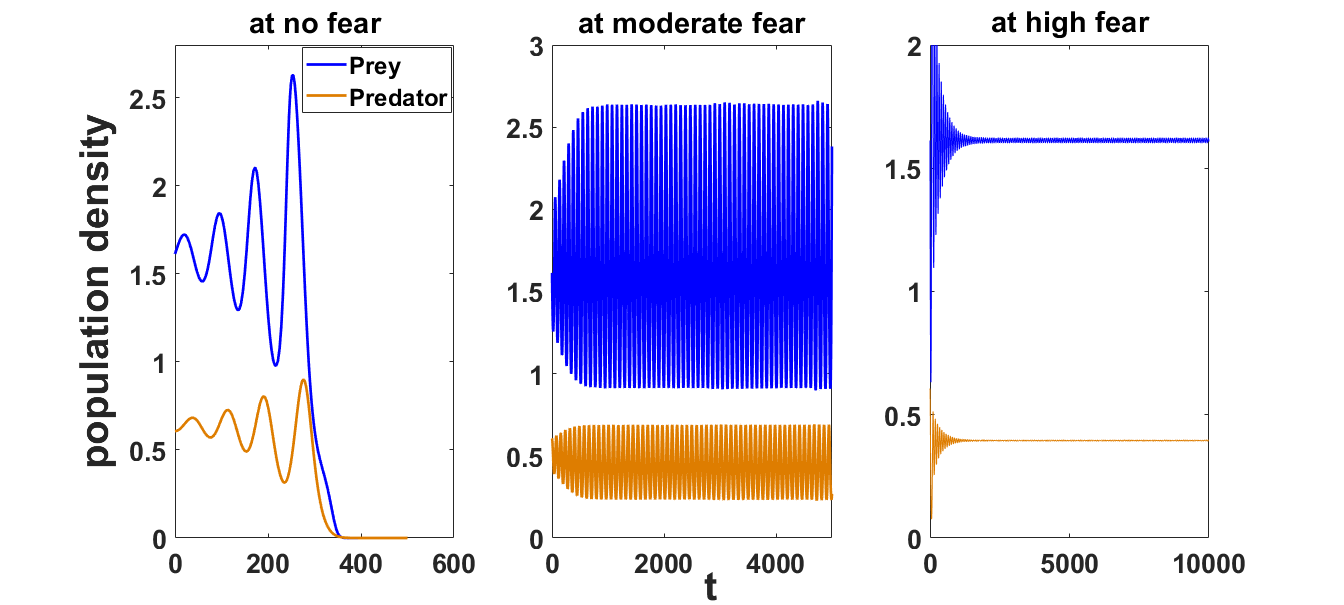}
    \end{center}
    \caption{Here we want to explore the effect of fear for different levels of fear. We take N=1.43 and the rest of the parameters are as before and we have taken K=0.00 as case(a) for the case of low fear level, K=0.05 as case(b) for the case of moderate fear, and K=0.15 as case (c) to show the high level of fear.}
    \label{label fig 7}
    \end{figure}
\begin{itemize}
    \item Study the effect of fear on the dynamics of the system:
\end{itemize}
Our primary focus now is to explore the impact of fear, and from the analysis of Figure\eqref{label fig 7}, we can infer that fear exerts a stabilizing effect on our system. In Figure \eqref{label fig 7}(a), where \(K = 0\), the origin is the sole attractor and is globally stable. Moving on to Figure \eqref{label fig 7}(b) with \(K = 0.05\), while the interior remains unstable, a stable limit cycle emerges around the fixed point. Subsequently, in \eqref{label fig 7}(c) with \(K = 0.15\), the interior equilibrium regains stability, and as the value of fear increases, the interior equilibrium maintains its stability, although the predator density decreases continuously at the stable equilibrium. It is noteworthy that the revert to a transcritical bifurcation is not possible, regardless of the increased value of the parameter 'K’.
\bibliography{SR}
\section*{Author contributions statement}
S.C. was responsible for methodology development, formal analysis, software implementation, drafting the original manuscript, and contributing to review and editing. S.S. was involved in conceptualization, methodology, supervision, and manuscript review and editing. J.C. provided supervision, conceptualization, and manuscript review and editing.
\section*{Additional information}
\textbf{Competing interests:} The authors declare that there are no financial or personal relationships that could have influenced the research presented in this paper.
\end{document}